\newlength{\boxwidth}
\title{On Smoothed Analysis of~Quicksort~and~Hoare's Find\thanks{An extended abstract of this
paper will appear in the proceedings of the 15th Int.\ Computing and Combinatorics Conference (COCOON 2009).}}
\author{Mahmoud Fouz$^{1}$ \qquad Manfred Kuf\-leitner$^{2}$ \\
        Bodo Manthey$^{1}$ \qquad Nima \mbox{Zeini~Jahromi}$^{1}$}
\date{\small\centering
   $^1$
   \parbox[t]{\boxwidth}{Saarland University, Department of Computer Science \\
      Postfach 151150, 66041 Saarbr\"ucken, Germany \\
      \texttt{mfouz/manthey@cs.uni-saarland.de},
      \texttt{nzeini@studcs.uni-saarland.de}} \\ \medskip
   $^2$ \parbox[t]{\boxwidth}{Universit\"at Stuttgart, FMI \\
      Universit\"atsstra\ss e 38, 70569 Stuttgart, Germany \\
      \texttt{manfred.kufleitner@fmi.uni-stuttgart.de}}}
\begin{document}


\maketitle


\begin{abstract}
  \noindent
  \textbf{Abstract.}\,
  We provide a smoothed analysis of Hoare's find algorithm and we
  revisit the smoothed analysis of quicksort.

  Hoare's find algorithm -- often called quickselect -- is an
  easy-to-implement algorithm for finding the $k$-th smallest element
  of a sequence.  While the worst-case number of comparisons that
  Hoare's find needs is $\Theta(n^2)$, the average-case number is
  $\Theta(n)$. We analyze what happens between these two extremes by
  providing a smoothed analysis of the algorithm in terms of two
  different perturbation models: additive noise and partial
  permutations.

  In the first model, an adversary specifies a sequence of $n$ numbers
  of $[0,1]$, and then each number is perturbed by adding a random
  number drawn from the interval $[0,d]$. We prove that Hoare's find
  needs $\Theta\bigl(\frac{n}{d+1} \sqrt{n/d} + n\bigr)$ comparisons
  in expectation if the adversary may also specify the element that we
  would like to find.  Furthermore, we show that Hoare's find needs fewer
  comparisons for finding the median.

  In the second model, each element is marked with probability $p$ and
  then a random permutation is applied to
  the marked elements. We prove that the expected number of
  comparisons to find the median is $\Omega\bigl((1-p)  \frac np
   \log n\bigr)$, which is again tight.

  Finally, we provide lower bounds for the smoothed number of
  comparisons of quicksort and Hoare's find for the median-of-three
  pivot rule, which usually yields faster algorithms than always selecting
  the first element: The pivot is the median of
  the first, middle, and last element of the
  sequence. We show that median-of-three does not yield a significant
  improvement over the classic rule: the lower bounds for the classic
  rule carry over to median-of-three.
\end{abstract}

\section{Introduction}
\label{sec:intro}

To explain the discrepancy between average-case and worst-case
behavior of the simplex algorithm, Spielman and Teng introduced the
notion of \emph{smoothed
  analysis}~\cite{SpielmanTeng:SmoothedAnalysisWhy:2004}. Smoothed
analysis interpolates between average-case and worst-case analysis:
Instead of taking a worst-case instance, we analyze the
expected worst-case running time subject to slight random
perturbations. The more influence we allow for perturbations, the closer
we come to the average case analysis of the algorithm. Therefore, smoothed
analysis is a hybrid of worst-case and average-case analysis.

In practice, neither can we assume that all instances are equally
likely, nor that instances are precisely worst-case instances. The
goal of smoothed analysis is to capture the notion of a \emph{typical}
instance mathematically. Typical instances are, in contrast to
worst-case instances, often subject to measurement or rounding
errors. Even if one assumes that nature is adversarial and that
the instance at hand was initially a worst-case
instance, due to such errors we would probably get a less difficult
instance in practice. On the other hand, typical instances still have
some (adversarial) structure, which instances drawn completely at
random do not.  Spielman and
Teng~\cite{SpielmanTeng:SmoothedFoCM:2006} give a survey of results
and open problems in smoothed analysis.

In this paper, we provide a smoothed analysis of Hoare's
find~\cite{Hoare:Find:1961} (see also Aho et al.~\cite[Algorithm
3.7]{AhoEA:DesignAnalysis:1974}), which is a simple algorithm for
finding the $k$-th smallest element of a sequence of numbers: Pick the
first element as the pivot and compare it to all $n-1$ remaining
elements.  Assume that $\ell-1$ elements are smaller than the
pivot. If $\ell = k$, then the pivot is the element that we are
looking for. If $\ell > k$, then we recurse to find the $k$-th smallest
element of the list of the smaller elements.  If $\ell < k$, then we
recurse to find the $(k-\ell)$-th smallest element among the larger
elements.  The number of comparisons to find the specified element is
$\Theta(n^2)$ in the worst case and $\Theta(n)$ on
average. Furthermore, the variance of the number of comparisons is
$\Theta(n^2)$~\cite{KirschenhoferProdinger:Find:1998}.  As our first
result, we close the gap between the quadratic worst-case running-time
and the expected linear running-time by providing a smoothed analysis.

Hoare's find is closely related to quicksort~\cite{Hoare:Quicksort:1961}
(see also Aho et al.~\cite[Section 3.5]{AhoEA:DesignAnalysis:1974}),
which needs $\Theta(n^2)$ comparisons in the worst case and
$\Theta(n \log n)$ on average~\cite[Section~5.2.2]{Knuth:TAOCP3:1998}.
The smoothed number of comparisons that quicksort needs has already been
analyzed~\cite{MantheyTantau:SmoothedTrees:2008}. Choosing the
first element as the pivot element, however, results in poor running-time
if the sequence is nearly sorted. There are two common approaches
to circumvent this problem: First, one can choose the pivot randomly among
the elements. However, randomness is needed to do so, which is sometimes
expensive. Second, without any randomness,
a common approach to circumvent this problem
is to compute the median of the first, middle, and last
element of the sequence and then to use this median as the
pivot~\cite{Singleton:Search:1969,Sedgewick:Quicksort:1978}.
This method is faster in practice since it yields more balanced
partitions and it makes the worst-case behavior
much more unlikely~\cite[Section~5.5]{Knuth:TAOCP3:1998}.
It is also faster both in average and in worst case, albeit only by constant
factors~\cite{Erkio:MedianQuick:1984,Sedgewick:Quicksort:1977}.
Quicksort with the median-of-three rule is widely used, for instance
in the \texttt{qsort()} implementation in the GNU standard C library
\texttt{glibc} library~\cite{GLIBC} and also in a recent very efficient
implementation of quicksort on a GPU~\cite{CedermanTsigas:GPUQuicksort:2008}.
The median-of-three rule has also been used for Hoare's find,
and the expected number of comparisons has been analyzed
precisely~\cite{KirschenhoferEA:MedianOfThreeFind:1997}.

Our second goal is a smoothed analysis of both quicksort and Hoare's find with
the median-of-three rule to get a thorough understanding of this variant of
these two algorithms.

\subsection{Preliminaries}
\label{ssec:prelim}

We denote \emph{sequences} of real numbers by
$\seq = (\seq_1, \ldots, \seq_n)$, where $\seq_i \in \real$. For
$n \in \nat$, we set $[n] = \{1, \ldots, n\}$. Let
$U = \{i_1, \ldots, i_\ell\} \subseteq [n]$ with
$i_1 < i_2 < \ldots < i_\ell$. Then
$\seq_U = (\seq_{i_1}, \seq_{i_2}, \ldots, \seq_{i_\ell})$ denotes the
\emph{subsequence} of $\seq$ of the elements at positions in~$U$.
We denote probabilities by $\probab$ and expected values by $\expected$.

Throughout the paper, we will assume for the sake of clarity that numbers like
$\sqrt{d}$ are integers and we do not write down the tedious floor and ceiling
functions that are actually necessary. Since we are interested in asymptotic
bounds, this does not affect the validity of the proofs.

\paragraph{Pivot Rules.}

Given a sequence $\seq$, a \emph{pivot rule} simply selects one
element of $\seq$ as the \emph{pivot element}.  The pivot element will
be the one to which we compare all other elements of $\seq$.  In this
paper, we consider four pivot rules, two of which play only a helper
role (the acronyms of the rules are in parentheses):
\begin{description}
\item[Classic rule (c):] The first element $\seq_1$ of $\seq$ is the pivot element.
\item[Median-of-three rule (m3):] The median of the first, middle, and last element is
      the pivot element, i.e., $\med(\seq_1, \seq_{\lceil n/2\rceil}, \seq_n)$.
\item[Maximum-of-two rule (max2):] The maximum of the first and the last element becomes
      the pivot element, i.e., $\max(\seq_1, \seq_n)$.
\item[Minimum-of-two rule (min2):] The minimum of the first and the last element becomes
      the pivot element, i.e., $\min(\seq_1, \seq_n)$.
\end{description}

The first pivot rule is the easiest-to-analyze and easiest-to-implement
pivot rule for quicksort and Hoare's find.
Its major drawback is that it yields
poor running-times of quicksort and Hoare's find for nearly sorted sequences.
The advantages of the median-of-three rule has already been discussed above.
The last two pivot rules are only used as tools for analyzing the median-of-three rule.

\paragraph{Quicksort, Hoare's Find, Left-to-right Maxima.}

Let $\seq$ be a sequence of length $n$ consisting of pairwise distinct numbers.
Let $p$ be the pivot element of $\seq$ according to some rule.
For the following definitions, let
$L = \{i \in \{1,\dots,n\} \mid \seq_i < p\}$ be the set of positions of
elements smaller than the pivot, and let
$R = \{i \in \{1,\dots,n\} \mid \seq_i > p\}$ be the set of positions of
elements greater than the pivot.

\emph{Quicksort} is the following sorting algorithm: Given $\seq$, we
construct $\seq_L$ and $\seq_R$ by comparing all elements to the pivot
$p$. Then we sort $\seq_L$ and $\seq_R$ recursively to obtain
$\seqp_L$ and $\seqp_R$, respectively. Finally, we output $\seqp =
(\seqp_L, p, \seqp_R)$. The number $\rquick{\seq}$ of comparisons
needed to sort $\seq$ is thus $\rquick{\seq} = (n-1) + \rquick{\seq_L}
+ \rquick{\seq_R}$ if $\seq$ has a length of $n \geq 1$, and $\rquick
\seq = 0$ when $\seq$ is the empty sequence.  We do not count the
number of comparisons needed to find the pivot element.  Since this
number is $O(1)$ per recursive call for the pivot rules considered
here, this does not change the asymptotics.

\emph{Hoare's find} aims at finding the $k$-th smallest element of
$\seq$.  Let $\ell = |\seq_L|$. If $\ell = k-1$, then $p$ is the
$k$-th smallest element.  If $\ell \geq k$, then we search for the
$k$-th smallest element of $\seq_L$.  If $\ell < k-1$, then we search
for the $(k-\ell)$-th smallest element of $\seq_R$.  Let
$\rsearch{\seq, k}$ denote the number of comparisons needed to find
the $k$-th smallest element of $\seq$, and let $\rsearch \seq =
\max_{k \in [n]} \rsearch{\seq, k}$.

The number of \emph{scan maxima} of $\seq$ is the number of maxima
seen when scanning $\seq$ according to some pivot rule: let $\rltr
\seq = 1 + \rltr{\seq_R}$, and let $\rltr \seq = 0$ when $\seq$ is the
empty sequence. If we use the classic pivot rule, the number of scan
maxima is just the number of \emph{left-to-right maxima}, i.e., the
number of new maxima that we see if we scan $\seq$ from left to right.
The number of scan maxima is a useful tool for analyzing quicksort and
Hoare's find, and has applications, e.g., in motion
complexity~\cite{DamerowMRSS2003}.

We write $\crltr \seq$, $\trltr \seq$, $\prltr \seq$, and $\qrltr
\seq$ to denote the number of scan maxima according to the classic,
median-of-three, maximum, or minimum pivot rule, respectively.
Similar notation is used for quicksort and Hoare's find.

\paragraph{Perturbation Model: Additive noise.}

The first perturbation model that we consider is \emph{additive
  noise}.  Let $d > 0$. Given a sequence $\seq \in [0,1]^n$, i.e., the
numbers $\seq_1, \ldots, \seq_n$ lie in the interval $[0,1]$, we
obtain the perturbed sequence $\pseq = (\pseq_1,\ldots, \pseq_n)$ by
drawing $\noise_1, \ldots, \noise_n$ uniformly and independently from
the interval $[0,d]$ and setting $\pseq_i = \seq_i + \noise_i$.
Note that $d=d(n)$ may be a function of the number $n$ of elements,
although this will not always be mentioned explicitly in the following.

We denote by $\ltr d \seq$, $\quick d \seq$ and $\search d \seq$ the
(random) number of scan maxima, quicksort comparisons, and comparisons
of Hoare's find of $\pseq$, preceded by the acronym of the pivot rule
used.

Our goal is to prove bounds for the smoothed number of comparisons
that Hoare's find needs, i.e., 
$\max_{\seq \in [0,1]^n} \expectedv{\csearch d \seq}$,
as well as for Hoare's find and quicksort with the median-of-three
pivot rule, i.e.,
$\max_{\seq \in [0,1]^n} \expectedv{\tsearch d \seq}$ and
$\max_{\seq \in [0,1]^n} \expectedv{\tquick d \seq}$. The $\max$
reflects that the sequence $\seq$ is chosen by an adversary.

If $d < 1/n$, the sequence $\seq$ can be chosen such that the order
of the elements is unaffected by the perturbation. Thus, in the
following, we assume $d \geq 1/n$. If $d$ is large, the noise will swamp
out the original instance, and the order of the elements of $\pseq$ will
basically depend only on the noise rather than the original instance. For intermediate
$d$, we interpolate between the two extremes.

The choice of the intervals for the adversarial part and the noise is arbitrary.
All that matters is the ratio of the sizes of the intervals: For $a < b$, we have
$\max_{\seq \in [a,b]^n} \expectedv{\search{d\cdot (b-a)}\seq} = 
\max_{\seq \in [0,1]^n} \expectedv{\search d\seq}$.
In other words, we can scale (and also shift) the intervals, and the results depend only
on the ratio of the interval sizes and the number of elements.
The same holds for all other measures that we consider.
We will exploit this in the analysis of Hoare's find.

\paragraph{Perturbation Model: Partial Permutations.}

The second perturbation model that we consider is \emph{partial permutations},
introduced by Banderier, Beier, and Mehlhorn~\cite{BanderierEA:Smoothed:2003}.
Here, the elements are left unchanged. Instead, we permute a random subsets of
the elements.

Without loss of generality, we can assume that $\seq$ is a permutation of a set
of $n$ numbers, say, $\{1, \ldots, n\}$. The perturbation parameter is
$p \in [0,1]$. Any element $\seq_i$ (or, equivalently, any position $i$) is
marked independently of the others with a probability of $p$. After that, all
marked positions are randomly permuted: Let $M$ be the set of positions that are
marked, and let $\pi: M \to M$ be a permutation drawn uniformly at random. Then
\[
  \pseq_i = \begin{cases}
              \seq_{\pi(i)} & \text{if $i \in M$ and} \\
              \seq_i        & \text{otherwise.}
            \end{cases}
\]
If $p=0$, no element is marked, and we obtain worst-case bounds. If $p = 1$,
all elements are marked, and $\pseq$ is a uniformly drawn random permutation.

\subsection{Known Results}
\label{ssec:known}

Additive noise is perhaps the most basic and natural perturbation
model for smoothed analysis. In particular, Spielman and Teng added
random numbers to the entries of the adversarial matrix in their
smoothed analysis of the simplex
algorithm~\cite{SpielmanTeng:SmoothedAnalysisWhy:2004}.  Damerow,
Meyer auf der Heide, R\"acke, Scheideler, and
Sohler~\cite{DamerowMRSS2003} analyzed the smoothed number of
left-to-right maxima of a sequence under additive noise. They obtained
upper bounds of $O\big(\sqrt{\frac nd \log n} + \log n\big)$ for a
variety of distributions and a lower bound of $\Omega(\sqrt n + \log
n)$.  Manthey and Tantau tightened their bounds for uniform noise to
$O\big(\sqrt{n/d} + \log n\big)$. Furthermore, they proved that
the same bounds hold for the smoothed tree height. Finally, they
showed that quicksort needs $O\big(\frac{n}{d+1} \cdot \sqrt{\frac
  nd}\big)$ comparisons in expectation, and this bound is also tight
\cite{MantheyTantau:SmoothedTrees:2008}.

Banderier et al.~\cite{BanderierEA:Smoothed:2003} introduced partial
permutations as a perturbation model for ordering problems like
left-to-right maxima or quicksort.  They proved that a sequence of $n$
numbers has, after partial permutation, an expected number of
$O\big(\sqrt{\frac np \log n}\big)$ left-to-right maxima, and they
proved a lower bound of $\Omega\bigl(\sqrt{n/p}\bigr)$ for $p \leq \frac
12$.  This has later been tightened by Manthey and
Reischuk~\cite{MantheyReischuk:SmoothedBST:2007} to $\Theta\bigl((1-p)
\cdot \sqrt{n/p}\bigr)$. They transferred this to the height of binary
search trees, for which they obtained the same bounds.  Banderier et
al.~\cite{BanderierEA:Smoothed:2003} also analyzed quicksort, for
which they proved an upper bound of $O\big(\frac np \log n\big)$.

\subsection{New Results}
\label{ssec:new}

We give a smoothed analysis of Hoare's find under additive noise. We
consider both finding an arbitrary element and finding the
median. First, we analyze finding arbitrary elements, i.e., the
adversary specifies $k$, and we have to find the $k$-th smallest
element (Section~\ref{sec:hoaremax}).  For this variant, we prove
tight bounds of $\Theta\bigl(\frac{n}{d+1} \sqrt{n/d} + n\bigr)$ for
the expected number of comparisons. This means that already for very
small $d \in \omega(1/n)$, the smoothed number of comparisons is
reduced compared to the worst case.  If $d$ is a small constant, i.e.,
the noise is a small percentage of the data values like $1\%$,
then $O(n^{3/2})$ comparisons suffice.

If the adversary is to choose $k$, our lower bound suggests that we
will have either $k = 1$ or $k = n$. The main task of Hoare's find,
however, is to find medians. Thus, second, we give a separate analysis
of how much comparisons are needed to find the median
(Section~\ref{sec:hoaremedian}).  It turns out that under additive
noise, finding medians is arguably easier than finding maximums or
minimums: For $d \leq 1/2$, we have the same bounds as above.
For $d \in (\frac 12, 2)$, we prove a lower bound of
$\Omega\bigl(n^{3/2} \cdot (1-\sqrt{d/2})\bigr)$, which again
matches the upper bound of Section~\ref{sec:hoaremax}, which of course
still applies (Section~\ref{ssec:smalld}).  For $d > 2$, we prove that
a linear number of comparisons suffices, which is considerably less
than the $\Omega\bigl((n/d)^{3/2}\bigr)$ general lower bound of
Section~\ref{sec:hoaremax}.  For the special value $d = 2$, we prove a tight bound of
$\Theta(n \log n)$ (Sections~\ref{ssec:d2upper}
and~\ref{ssec:d2lower}).

After that, we aim at analyzing different pivot rules, namely the
median-of-three rule.  As a tool, we analyze the number of scan maxima
under the maximum-of-two, minimum-of-two, and median-of-three rule
(Section~\ref{sec:ltrm}). We essentially show that the same bounds as
for the classic rule carry over to these rules.  Then we apply these
findings to quicksort and Hoare's find (Section~\ref{sec:mot3}).
Again, we prove a lower bound that matches the lower bound for the
classic rule.  Thus, the median-of-three does not seem to help much
under additive noise.

The results concerning additive noise are summarized in Table~\ref{tab:additive}.

\begin{table}[t]
\centering
\begin{tabular}{lllll}
\toprule 
algorithm & $d \leq 1/2$ & $d \in (1/2, 2)$ & $d=2$ & $d > 2$ \\[-1pt] 
\hline \hline
quicksort (c) & $\Theta\bigl(n  \sqrt{n/d}\bigr)$ & $\Theta\bigl(n^{3/2}\bigr)$ &
$\Theta\bigl(n^{3/2}\bigr)$ & $\Theta\bigl((n/d)^{3/2}\bigr)$ \\ 
quicksort (m3) &$\Omega\bigl(n  \sqrt{n/d}\bigr)$ &  $\Omega\bigl(n^{3/2}\bigr)$ &
$\Omega\bigl(n^{3/2}\bigr)$ & $\Omega\bigl((n/d)^{3/2}\bigr)$ \\ 
\hline
Hoare's find (median, c) &$\Theta\bigl(n  \sqrt{n/d}\bigr)$ 
&$\Omega\bigl(n^{3/2}  (1-\sqrt{d/2})\bigr)$&$\Theta(n \log n)$&
$O\bigl(\frac{d}{d-2} \cdot n\bigr)$ \\ 
Hoare's find (general, c) &$\Theta\bigl(n  \sqrt{n/d}\bigr)$ & $\Theta\bigl(n^{3/2}\bigr)$ &
$\Theta\bigl(n^{3/2}\bigr)$ & $\Theta\bigl((n/d)^{3/2}\bigr)$ \\ 
Hoare's find (general, m3) &$\Theta\bigl(n  \sqrt{n/d}\bigr)$ & $\Theta\bigl(n^{3/2}\bigr)$ &
$\Theta\bigl(n^{3/2}\bigr)$ & $\Theta\bigl((n/d)^{3/2}\bigr)$ \\ \hline
scan maxima (c) & $\Theta\bigl(\sqrt{n/d}\bigr)$
&$\Theta\bigl(\sqrt{n}\bigr)$ & $\Theta\bigl(\sqrt{n}\bigr)$
& $\Theta\bigl(\sqrt{n/d}\bigr)$ \\ 
scan maxima (m3) & $\Theta\bigl(\sqrt{n/d}\bigr)$
&$\Theta\bigl(\sqrt{n}\bigr)$ & $\Theta\bigl(\sqrt{n}\bigr)$
& $\Theta\bigl(\sqrt{n/d}\bigr)$ \\ \toprule 
\end{tabular}
\caption{Overview of bounds for additive noise. The bounds for quicksort
  and scan maxima with classic pivot rule are by Manthey and
  Tantau~\cite{MantheyTantau:SmoothedTrees:2008}.
  The upper bounds for Hoare's find in general apply also to
  Hoare's find for finding the median. Note that, even for large $d$,
  the precise bounds for quicksort, Hoare's find, and scan maxima
  never drop below $\Omega(n \log n)$, $\Omega(n)$, and $\Omega(\log n)$, respectively.}
\label{tab:additive}
\end{table}

Finally, and to contrast our findings for additive noise, we analyze
Hoare's find under partial permutations
(Section~\ref{sec:hoarepp}). We prove that there exists a sequence on
which Hoare's find needs an expected number of $\Omega\bigl((1-p)
\cdot \frac np \cdot \log n\bigr)$ comparisons. Since this matches the
upper bound for quicksort~\cite{BanderierEA:Smoothed:2003} up to a
factor of $O(1-p)$, this lower bound is essentially tight.

For completeness, Table~\ref{tab:perm} gives an overview of the results
for partial permutations.

\begin{table}[t]
\centering
\begin{tabular}{ll} 
\toprule
algorithm & bound \\ 
\hline \hline
quicksort & $O\bigl((n/p) \log n\bigr)$ \\ 
Hoare's find & $\Omega\bigl((1-p) (n/p) \log n\bigr)$ \\ 
scan maxima  & $\Theta\bigl((1-p) \sqrt{n/p}\bigr)$ \\ 
binary search trees &$\Theta\bigl((1-p) \sqrt{n/p}\bigr)$  \\ \toprule
\end{tabular}
\caption{Overview of bounds for partial permutations.
All results are for the classic pivot rule. The results about
quicksort, scan maxima, and binary search trees are by
Banderier et al.~\cite{BanderierEA:Smoothed:2003} and Manthey and Reischuk~\cite{MantheyReischuk:SmoothedBST:2007}. The upper bound
for quicksort also holds for Hoare's find, while the lower bound
for Hoare's find also applies to quicksort.}
\label{tab:perm}
\end{table}


\section{Smoothed Analysis of Hoare's Find: General Bounds}
\label{sec:hoaremax}

In this section, we prove tight bounds for the smoothed number of
comparisons that Hoare's find needs using the classic pivot rule.

\begin{theorem}
\label{thm:search}
  For $d \ge 1/n$, we have
  \[
        \max_{\seq \in [0,1]^n} \expectedv{\csearch{d}{\seq}}
    \in \Theta\bigl(\textstyle\frac{n}{d+1} \sqrt{n/d} + n \bigr).
  \]
\end{theorem}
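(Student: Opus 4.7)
The plan is to prove the upper and lower bounds separately, treating the additive $n$ term as essentially a trivial addendum.

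\textbf{Upper bound.} The key starting observation is that Hoare's find does no more work than quicksort on the same input: at each level of the recursion tree quicksort descends into both the smaller and the larger sub-sequence, whereas Hoare's find descends into at most one of them. Hence $\csearch{d}{\seq} \leq \cquick{d}{\seq}$ pointwise in the noise. Plugging this into the Manthey--Tantau upper bound for quicksort~\cite{MantheyTantau:SmoothedTrees:2008} already gives $O\bigl(\frac{n}{d+1}\sqrt{n/d}\bigr)$ whenever this term dominates $n$. In the complementary regime, where the quicksort bound itself would drop to $\Theta(n\log n)$, I would instead give a direct linear argument: once $d$ is large enough, the pivot selected at each recursion level lies within a constant factor of the median of the current sub-sequence with constant probability, so the expected sub-sequence size shrinks by a constant factor per level and the total number of comparisons forms a geometric series summing to $O(n)$.

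\textbf{Lower bound.} The $\Omega(n)$ part is immediate, since Hoare's find always performs $n-1$ comparisons at the top level. For the $\Omega\bigl(\frac{n}{d+1}\sqrt{n/d}\bigr)$ part, I would take the increasing adversarial instance $\seq_i = (i-1)/n$ and let the adversary ask for $k = n$, i.e.\ find the maximum. When searching for the maximum in a (nearly) increasing sequence, the pivots visited by Hoare's find are precisely the positions of the left-to-right maxima of $\pseq$: once a pivot $\pseq_{i_j}$ has been processed, the next pivot is the first index $i > i_j$ with $\pseq_i > \pseq_{i_j}$, and for this particular $\seq$ that coincides with the next scan maximum. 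At the $j$-th such pivot, the sub-sequence in which we recurse consists of the elements to its right exceeding $\pseq_{i_j}$, which for the almost-sorted $\pseq$ has size $\Omega(n - i_j)$. Summing over the scan maxima and using both the lower bound $\Omega(\sqrt{n/d})$ on their number and the fact that a constant fraction of them lie in the first half of the sequence yields $\Omega\bigl(n\sqrt{n/d}\bigr)$ total comparisons, which matches the claimed bound in the small-$d$ regime. For larger $d$, the scaling invariance noted in the preliminaries reduces the analysis to an equivalent instance supported on a sub-interval of length $1/d$, recovering the predicted $\Omega\bigl((n/d)^{3/2}\bigr)$.

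The main obstacles I anticipate are twofold. First, in the upper bound, shaving the $\log n$ factor that comes for free from the quicksort comparison requires a geometric-shrinkage argument tailored to the one-sided recursion of Hoare's find and not available to quicksort. Second, in the lower bound, making the sketch quantitative uniformly in $d \geq 1/n$ requires control over the joint distribution of the positions $i_j$ of the scan maxima, so that $\sum_j (n - i_j)$ is truly of order $n \cdot \sqrt{n/d}$ rather than only $\Omega(n)$ per maximum; combining this with the scaling reduction for large $d$ is the most delicate point.
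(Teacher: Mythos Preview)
Your upper bound has a genuine gap. The quicksort comparison only covers $d\in O\bigl(n^{1/3}\log^{-2/3}n\bigr)$; beyond that the quicksort bound is $\Theta(n\log n)$, whereas the target is $\Theta\bigl((n/d)^{3/2}+n\bigr)$. Near the threshold $d\approx n^{1/3}\log^{-2/3}n$ the target is still $\Theta(n\log n)$, so an $O(n)$ bound via geometric shrinkage would contradict the matching lower bound there. And the shrinkage claim itself fails on concrete inputs: take $\seq=(1/n,2/n,\ldots,(n/2)/n,1,\ldots,1)$ and search for the maximum. The first $\Theta(\sqrt{n/d})$ pivots are left-to-right maxima of the increasing half; each has value at most $\tfrac12+d$, while $\Theta(n/d)$ elements of the second half exceed $\tfrac12+d$, so the sub-sequence size never drops below $\Theta(n/d)$ throughout these $\Theta(\sqrt{n/d})\gg\log n$ levels. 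The pivot is simply not near the median of the current sub-sequence, because its conditional distribution is uniform on $[\seq_i,\seq_i+d]$ intersected with the current range, and the adversary controls $\seq_i$. The paper's route for large $d$ is a different, non-obvious decomposition: split the elements into the ``regular'' ones landing in $[1,d]$ (these are genuinely uniform on $[1,d]$ irrespective of $\seq_i$, so the average-case $O(n)$ bound applies among them) and two boundary layers of expected size $O(n/d)$ each (on which the quicksort bound gives $O\bigl((n/d)^{3/2}\bigr)$); the cross-comparisons between the parts are then bounded separately by $O(n)$.

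Your lower bound is close, and the purely increasing sequence does in fact suffice, but the two steps you single out as delicate are exactly where the sketch breaks. The claim that the $j$-th pivot costs $\Omega(n-i_j)$ fails once $d$ is large (nothing forces elements to the right of $i_j$ to exceed $\pseq_{i_j}$), and the scaling reduction does not repair it: confining $\seq$ to a sub-interval of length $1/d$ while keeping noise $d$ is, by the scaling identity, equivalent to noise $d^2$ on $[0,1]$, which goes the wrong way. The clean argument---which is what the paper does, using the half-increasing, half-constant sequence---is to observe instead that the $\Theta(\sqrt{n/d})$ left-to-right maxima of the first half all have value at most $\tfrac12+d$, while independently an expected $\Theta\bigl(n/(d+1)\bigr)$ elements of the second half exceed $\tfrac12+d$; every first-half pivot must be compared to every such element, and the product gives $\Omega\bigl(\frac{n}{d+1}\sqrt{n/d}\bigr)$ uniformly in $d\ge 1/n$.
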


The following subsection contains the proof of the upper bound.
After that, we prove the lower bound.

\subsection{General Upper Bound for Hoare's Find}
\label{ssec:search_upper}

We already have an upper bound for the smoothed number of comparisons
that quicksort needs~\cite{MantheyTantau:SmoothedTrees:2008}.
This bound is $O\bigl(\frac{n}{d+1} \cdot
\sqrt{n/d} + n \log n\bigr)$, which matches the bound of
Theorem~\ref{thm:search} for $d \in O\bigl(n^{1/3} \cdot \log^{-2/3}
n\bigr)$.  We have $\rsearch{\seq} \leq \rquick{\seq}$ for any
$\seq$. By monotonicity of the expectation, this inequality yields
$\expectedv{\search d\seq} \leq \expectedv{\quick d\seq}$.  Thus, $d
\in \Omega\bigl(n^{1/3} \cdot \log^{-2/3} n\bigr)$ remains to be
analyzed.

In the next lemma, we show how to analyze the number of comparisons in
terms of subsequences. Lemma~\ref{lem:plusn} states that adding a
single element to a sequence increases the number of comparisons at
most by an additive $O(n)$.  Lemma~\ref{lem:search_upper} states the
actual upper bound.

\begin{lemma}
\label{lem:covering}
Let $\seq$ be a sequence, and let $k \in [n]$. Let $j$ be the position
of the $k$-th smallest element of $\seq$. Let $U_1, \ldots, U_m$ be a
covering of $[n]$, i.e., $\bigcup_{\ell=1}^m U_\ell = [n]$, such that
$j \in U_\ell$ for all $\ell \in [m]$.  Let $k_1, \ldots, k_m$ be
chosen such that $\seq_j$ is the $k_\ell$-th smallest element of
$\seq_{U_\ell}$.  Then
\[
  \rsearch{\seq,k} \leq \sum_{\ell=1}^m \rsearch{\seq_{U_\ell}, k_\ell} + Q,
\]
where $Q$ is the number of comparisons of positions $a$ and $b$ in
$\rsearch{\seq,k}$ such that $a$ and $b$ do not share a common set in
the covering, i.e., $\{a,b\} \not\subseteq U_\ell$ for all $\ell \in
[m]$.
\end{lemma}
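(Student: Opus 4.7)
The plan is to prove the lemma by a charging argument: every comparison counted by $\rsearch{\seq,k}$ is either ``internal'' to some $U_\ell$ (i.e.\ both positions involved lie in $U_\ell$) or it falls into $Q$. I will charge each internal comparison to one of the $\rsearch{\seq_{U_\ell}, k_\ell}$ terms, allowing a comparison that is internal for several indices $\ell$ to be charged multiple times. Once I show that every internal-to-$U_\ell$ comparison of $\rsearch{\seq,k}$ is also performed by Hoare's find on $\seq_{U_\ell}$ looking for the $k_\ell$-th smallest, summing over $\ell$ and adding $Q$ yields the claimed bound.

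So the heart of the proof is the structural claim: if $\{a,b\}\subseteq U_\ell$ and the pair $(a,b)$ is compared during $\rsearch{\seq,k}$, then $(a,b)$ is also compared during $\rsearch{\seq_{U_\ell}, k_\ell}$. I will establish this using the following standard view of Hoare's find: at each step the algorithm maintains a value interval $(\lo,\hi)$ containing $\seq_j$, the active window is exactly $\{m:\seq_m\in(\lo,\hi)\}$, the pivot is the position-minimum element of the active window, and the interval shrinks to the side containing $\seq_j$ after each step. Two positions are compared iff one is the pivot of a step whose active window contains the other; the same description, restricted to $U_\ell$, governs $\rsearch{\seq_{U_\ell},k_\ell}$.

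Given this characterization, I would align the two executions by induction on the steps of $\rsearch{\seq,k}$, carrying the invariant that the current value interval $(\lo'_s,\hi'_s)$ of $\rsearch{\seq_{U_\ell},k_\ell}$ contains the current interval $(\lo_t,\hi_t)$ of $\rsearch{\seq,k}$. Steps of $\rsearch{\seq,k}$ whose pivot lies outside $U_\ell$ only tighten $(\lo_t,\hi_t)$ while leaving $\rsearch{\seq_{U_\ell},k_\ell}$ untouched, preserving the invariant. When the pivot $a$ of some step of $\rsearch{\seq,k}$ lies in $U_\ell$, I first advance $\rsearch{\seq_{U_\ell},k_\ell}$ through any pivots whose values lie in $(\lo'_s,\hi'_s)\setminus(\lo_t,\hi_t)$ (these are positions of $U_\ell$ that have been eliminated in $\rsearch{\seq,k}$ by earlier non-$U_\ell$ pivots). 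Using that $a$ is the position-minimum of $\{m:\seq_m\in(\lo_t,\hi_t)\}$ and the containment of intervals, one argues that once $\rsearch{\seq_{U_\ell},k_\ell}$ has caught up in this sense, its next pivot is exactly $a$, and its current value interval still contains $(\lo_t,\hi_t)$, hence contains $\seq_b$; so $b\in U_\ell$ is still in its active window and the comparison $(a,b)$ is performed there.

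The main obstacle is this alignment: the pivot sequences of the two executions are genuinely different, since $\rsearch{\seq_{U_\ell},k_\ell}$ may process positions as pivots that $\rsearch{\seq,k}$ already discarded via pivots outside $U_\ell$. The bookkeeping must show that these extra pivots in $\rsearch{\seq_{U_\ell},k_\ell}$ can always be ``flushed'' before a matching $U_\ell$-pivot $a$ is reached, and that no $U_\ell$-pivot of $\rsearch{\seq,k}$ is skipped by $\rsearch{\seq_{U_\ell},k_\ell}$. Once the invariant is maintained through each such advancement, the internal comparisons of $\rsearch{\seq,k}$ embed as claimed, and the double count finishes the proof.
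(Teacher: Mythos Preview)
Your charging strategy is sound, and the operational alignment you sketch does work: the invariant $(\lo'_s,\hi'_s)\supseteq(\lo_t,\hi_t)$ is preserved both when the full execution pivots outside $U_\ell$ and during the ``flushing'' steps (each flushed pivot has value outside $(\lo_t,\hi_t)$, so shrinking toward $\seq_j$ keeps the containment), and once the subproblem's position-minimum reaches $a$ the comparison with $b$ is forced. So the proposal is correct.

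The paper, however, bypasses all of the dynamic bookkeeping by arguing the contrapositive via a static ``blocker'' characterization. It observes that two elements $a,b\in U_\ell$ fail to be compared in $\rsearch{\seq_{U_\ell},k_\ell}$ precisely when some $c\in U_\ell$ positioned before both has a value that separates them (either strictly between $\seq_a$ and $\seq_b$, or between the target $\seq_j$ and both of them). Since $U_\ell\subseteq[n]$ and the relative order is preserved, the very same $c$ blocks the comparison of $a$ and $b$ in $\rsearch{\seq,k}$ as well. This is a two-line argument once the characterization is stated, whereas your approach has to synchronize two different pivot sequences and argue termination of the flushing phase. What your route buys is that it never appeals to the (slightly delicate) closed-form criterion for when Hoare's find compares two elements; the paper's route buys brevity by leaning on exactly that criterion. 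Both establish the same structural claim, but the paper's contrapositive is the shorter path.
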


\begin{proof}
Fix any $\ell \in [m]$, and let $a$ and $b$ be two elements of $\seq_{U_\ell}$
that are not compared for finding the $k_\ell$-th smallest element
of $U_\ell$. Without loss of generality, we assume that $a < b$ and that
$a$ appears before $b$ in $\seq_{U_\ell}$ (and hence in $\seq$).

If $a$ is not compared to $b$, then this is due to one of the following
two reasons:
\begin{enumerate}
\item There is a $c$ prior to $a$ in $\seq_{U_\ell}$ such that either
      $\seq_k \leq c < a$ or that $a < b < c \leq \seq_k$.
\item There is a $c$ in $\seq_{U_\ell}$ prior to $a$ with $a < c < b$.
\end{enumerate}
In either case, $a$ and $b$ are also not compared while searching for the $k$-th
smallest element of~$\seq$. All comparisons are accounted for,
either in a $\rsearch{\seq_{U_\ell}}$ or in $Q$, which proves the lemma.
\end{proof}


\begin{lemma}
\label{lem:plusn}
Let $\seq$ be any sequence of length $n$, and let $\seqp$ be obtained from $\seq$
by inserting one arbitrary element $t$ at an arbitrary position of $\seq$.
Then
\[
  \rsearch \seqp \leq \rsearch \seq + n + O(1).
\]
\end{lemma}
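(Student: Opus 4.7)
The plan is to split, for each target rank $k' \in [n+1]$, the comparisons made by Hoare's find on $\seqp$ into those that involve $t$ and those that do not, and to bound each piece separately. Let $S'_1 = \seqp \supsetneq S'_2 \supsetneq \cdots$ denote the subproblems, with pivots $p_1, p_2, \ldots$ and sizes $n'_\ell = |S'_\ell|$. A direct level-by-level count shows that comparisons involving $t$ are few: at each level $t$ participates in just one comparison, unless $t$ is itself chosen as the pivot at some level $\ell^*$, in which case it participates in $n'_{\ell^*} - 1$ comparisons and then vanishes from all further subproblems. Using $n'_\ell \leq n + 2 - \ell$ (subproblem sizes strictly decrease), comparisons involving $t$ total at most $(\ell^* - 1) + (n'_{\ell^*} - 1) \leq n$ when $t$ becomes a pivot, and at most the recursion depth $\leq n+1$ otherwise; so at most $n + O(1)$ comparisons involve $t$.

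\textbf{Case A: $t$ is not the $k'$-th smallest.} Here I would apply Lemma~\ref{lem:covering} to $\seqp$ with the two-set covering $U_1 = [n+1] \setminus \{t_\mathrm{pos}\}$ and $U_2 = \{t_\mathrm{pos}, j'\}$, where $j' \neq t_\mathrm{pos}$ is the position of the target; both sets contain $j'$ as required. Since $\seqp_{U_1} = \seq$ and $\seqp_{U_2}$ has length $2$, the lemma yields $\rsearch{\seqp,k'} \leq \rsearch{\seq,k} + 1 + Q$ for an appropriate $k$, where the residual $Q$ counts only comparisons that involve $t_\mathrm{pos}$ but are not the pair $\{t_\mathrm{pos}, j'\}$. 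The bound from the previous paragraph gives $Q \leq n + O(1)$, so $\rsearch{\seqp,k'} \leq \rsearch \seq + n + O(1)$.

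\textbf{Case B: $t$ is the $k'$-th smallest.} Let $L$ be the level at which $t$ is picked as pivot, so $\rsearch{\seqp,k'} = \sum_{\ell=1}^{L}(n'_\ell - 1)$. When $L \geq 2$ (the case $L = 1$ being trivial), $t$ is not the first element of $S'_\ell$ for any $\ell < L$, so $p_\ell$ is also the first element of $S_\ell := S'_\ell \setminus \{t\}$. I would then mirror the $\seqp$-execution on $\seq$ for an auxiliary rank $\hat k$ chosen as follows: if $S'_L \setminus \{t\} \neq \emptyset$, take $\hat k$ to be the rank in $\seq$ of any $w \in S'_L \setminus \{t\}$; otherwise ($S'_L = \{t\}$) take $\hat k$ to be the rank in $\seq$ of $p_{L-1}$. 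In either case the mirror execution uses exactly the pivots $p_1, \ldots, p_{L-1}$ and makes the same recursion choices as the $\seqp$-execution, because the mirror target stays on the same side as $t$ at every step and is never equal to any $p_\ell$ with $\ell < L$. Thus the mirror performs at least $\sum_{\ell=1}^{L-1}(n'_\ell - 2)$ comparisons, so this sum is bounded by $\rsearch \seq$; combining with $(L-1) + (n'_L - 1) \leq n$ gives $\rsearch{\seqp,k'} \leq \rsearch \seq + n$. Taking the maximum over $k'$ then completes the proof.

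\textbf{Main obstacle.} Case B is the delicate step: the auxiliary rank $\hat k$ must be chosen so that the mirror execution follows the same recursion path as the $\seqp$-execution for all $L-1$ initial levels without terminating prematurely, and the split into two sub-cases---according to whether $S'_L$ contains anything besides $t$---is needed to guarantee that a valid surrogate target exists in every situation.
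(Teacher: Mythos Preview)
Your proof is correct. In Case~A (target $\neq t$) you follow exactly the paper's route: apply Lemma~\ref{lem:covering} with the two-set covering $U_1=[n+1]\setminus\{t_{\mathrm{pos}}\}$, $U_2=\{t_{\mathrm{pos}},j'\}$, observe that $\seqp_{U_1}=\seq$ and $|U_2|=2$, and bound the residual $Q$ by the at most $n+O(1)$ comparisons that involve~$t$. The paper's proof consists of essentially only this step and does not isolate the case where $t$ itself is the target; strictly speaking, Lemma~\ref{lem:covering} demands that the target position lie in \emph{every} set of the covering, which fails for $U_1$ when $j'=t_{\mathrm{pos}}$. Your Case~B closes this small gap with a direct mirror-execution argument: you pick a surrogate target in $\seq$ (some $w\in S'_L\setminus\{t\}$, or $p_{L-1}$ when that set is empty) so that Hoare's find on $\seq$ reproduces the pivots $p_1,\ldots,p_{L-1}$ and hence performs at least $\sum_{\ell<L}(n'_\ell-2)$ comparisons; the leftover $(L-1)+(n'_L-1)\le n$ then finishes the bound. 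So your write-up is longer than the paper's terse version but is in fact more complete.
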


\begin{proof}
Let $U, U' \subseteq [n+1]$ such that $U$ contains all positions of elements of $\seq$
in $\seqp$ and $U'$ contains the positions of the target element and of $t$.
We apply Lemma~\ref{lem:covering} with these two sets.
First, $\rsearch{\seqp_{U'}} \in O(1)$ since $U'$ contains only two elements.
Second, $Q \leq n$: we only have
to count the number of comparisons that involve $t$, and $t$ is compared to any
other element of $\seq$ at most once. Third,
$\rsearch{\seqp_U} = \rsearch \seq$ since $\seq = \seqp_U$.
\end{proof}


\begin{lemma}
\label{lem:search_upper}
  Let $d \geq n^{1/3} \cdot \log^{-2/3} n$, and let $\seq$ be arbitrary.
  Then
  \[
    \expectedv{\csearch{d}{\seq}}
    \in O\left(\left(\frac{n}{d}\right)^{3/2}+ n \right).
  \]
\end{lemma}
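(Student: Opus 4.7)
My plan is to use the decomposition provided by Lemma~\ref{lem:covering}, combined with the smoothed quicksort bound $O\bigl(\tfrac{n}{d+1}\sqrt{n/d}+n\log n\bigr)$ of~\cite{MantheyTantau:SmoothedTrees:2008}. Note that for $d \le n^{1/3}\log^{-2/3} n$ the pointwise inequality $\rsearch{}\le\rquick{}$ together with the quicksort bound already yields the target $O\bigl((n/d)^{3/2}+n\bigr)$; so only the regime $d\ge n^{1/3}\log^{-2/3} n$, where the $n\log n$ term of the quicksort bound must be improved to $n$, requires work.

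Let $j$ be the position of the $k$-th smallest element of $\pseq$. I partition $[n]\setminus\{j\}$ into $m$ consecutive blocks $B_1,\dots,B_m$ of size roughly $s$, where the parameter $s$ will be optimized at the end. Setting $U_\ell=B_\ell\cup\{j\}$ produces a covering of $[n]$ in which every $U_\ell$ contains $j$, so Lemma~\ref{lem:covering} gives
\[
  \rsearch{\pseq,k}\;\le\;\sum_{\ell=1}^m \rsearch{\pseq_{U_\ell},k_\ell}+Q,
\]
where $Q$ counts cross-block comparisons. For each $\ell$ I bound $\rsearch{\pseq_{U_\ell},k_\ell}\le\rquick{\pseq_{U_\ell}}$ and invoke the smoothed quicksort bound on the length-$(s+1)$ subsequence $\pseq_{U_\ell}$; this is legitimate because the adversarial range $[0,1]$ and noise range $[0,d]$ have the same ratio as in the full problem, so the scaling invariance noted in the preliminaries applies with the same $d$. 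Each block then contributes $O\bigl(\tfrac{s}{d+1}\sqrt{s/d}+s\log s\bigr)$ in expectation, and summing over $m\approx n/s$ blocks yields a total of $O\bigl(\tfrac{n}{d+1}\sqrt{s/d}+n\log s\bigr)$ for the first term of the decomposition.

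The hard part is bounding $\expected[Q]$. A trivial bound $Q\le\rsearch{\pseq,k}$ is of course circular, so the analysis must exploit the structure of Hoare's find. My approach is to observe that Hoare's recursion forms a single path, so the number of comparisons equals $\sum_{i\ge 0}|S_i|$, where $S_i$ is the subsequence at the $i$-th recursive call. A cross-block comparison at level $i$ occurs when the pivot and some remaining element lie in different blocks, which happens for at most $|S_i|-|S_i\cap B_{\ell(i)}|$ pairs. Under the additive noise assumption with $d$ in the stated regime, the pivot rank at each recursive call is nearly uniform in the current subproblem (up to an $O(1/d)$ deviation), so the expected size $|S_{i+1}|$ is at most a constant factor smaller than $|S_i|$; iterating gives $\sum_i\expected[|S_i|]=O(n)$, and a more refined accounting separating the ``cost of the adversarial bias'' (charged at $(n/d)^{3/2}$) from the ``cost of the uniform part'' (charged at $n$) should yield $\expected[Q]=O\bigl((n/d)^{3/2}+n\bigr)$. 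I expect this tracking of the conditional noise distribution inside the recursive subproblem to be the main obstacle, because after the first pivot split the noise is no longer uniform on $[0,d]$; the argument will need to control how much the conditioning shrinks the effective noise range for elements whose adversarial value lies close to the pivot. With $\expected[Q]$ under control, choosing $s$ to make $\tfrac{n}{d+1}\sqrt{s/d}$ balance $(n/d)^{3/2}$ (e.g.\ $s=\Theta(n/d)$ up to polylogarithmic factors, still within $n$) and observing that $n\log s\le n\log n$ is dominated by $(n/d)^{3/2}$ in the assumed range of $d$ then gives the claimed bound $O\bigl((n/d)^{3/2}+n\bigr)$.
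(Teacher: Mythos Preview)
Your decomposition is by \emph{position} (consecutive blocks), whereas the paper's is by \emph{value}: it sets $R=\{i:\pseq_i\in[1,d]\}$, $F=\{i:\noise_i\le 3\}$, $B=\{i:\noise_i\ge d-3\}$, uses Lemma~\ref{lem:covering} on this covering, and exploits the fact that elements conditioned to lie in $[1,d]$ are \emph{exactly} uniform there. Hence $\pseq_R$ obeys the average-case bound $O(n)$, while $|F|,|B|\in O(n/d)$ with high probability, so the quicksort bound on those small subsequences gives $O((n/d)^{3/2})$; the cross terms are handled by short ad~hoc arguments. That single observation---conditioning on the value range rather than on position yields a truly uniform subproblem---is the idea your proposal is missing.

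Without it, your argument has a genuine gap in the treatment of $Q$. Note that in Hoare's find the total number of comparisons is precisely $\sum_{i\ge 0}(|S_i|-1)$. Your claim that ``the pivot rank at each recursive call is nearly uniform \dots\ so $\sum_i\expected[|S_i|]=O(n)$'' would therefore prove $\expectedv{\csearch d\seq}=O(n)$ outright, contradicting the lower bound of Lemma~\ref{lem:search_lower}. The claim is in fact false: for $\seq=(1/n,2/n,\dots,1)$ and constant $d$, the first pivot $\pseq_1$ has rank $O(\sqrt{n})$, not $\Theta(n)$, and the subproblem sizes decrease only additively for $\Theta(\sqrt{n})$ steps. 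Your fallback, the ``more refined accounting'' that should give $\expected[Q]=O((n/d)^{3/2}+n)$, is circular for the same reason: since $Q\le\sum_i|S_i|$ and (for $s$ small enough that $n\log s=O(n)$, which you need when $d\ge n^{1/3}$) almost every comparison is cross-block, bounding $Q$ is equivalent to bounding the whole quantity. The positional blocks contribute nothing to that task, because the adversary's structure is in the values, not in the indices.

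A secondary issue: your final balancing step asserts that $n\log s\le n\log n$ is dominated by $(n/d)^{3/2}$ in the assumed range of $d$. The opposite is true---$(n/d)^{3/2}\le n\log n$ precisely when $d\ge n^{1/3}\log^{-2/3}n$---so to make the in-block term acceptable you are forced to take $s=O(1)$, at which point essentially every comparison is a cross-block comparison and the decomposition is vacuous.
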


\begin{proof}
The key insight is the following observation: Given that an element
$\pseq_i$ assumes a value in $[1, d]$, it is uniformly distributed in
this interval.

Let $R = \{i \mid \pseq_i \in [1,d]\}$ be the set of all indices of
\emph{regular} elements, i.e., elements that are uniformly distributed
in $[1,d]$.  Let $F = \{i \mid \noise_i \leq 3\}$ be the set of all
elements with noise at most $3$, which covers in particular all $i$
that are not in $R$ due to $\pseq_i$ being too small.  Analogously,
let $B = \{i \mid \noise_i \geq d-3\}$ be the set of all elements with
noise at least $d-3$, which includes all $i$ that are not in $R$ due
to $\pseq_i$ being too large.  We have $F \cup R \cup B = [n]$.

We prove that the expected values of $\csearch{d}{\pseq_{F}}$,
$\csearch{d}{\pseq_{R}}$, $\csearch{d}{\pseq_{B}}$ as well as the
expected number of comparisons between elements in different subsets
are bounded from above by $O\bigl((n/d)^{3/2} +
n\bigr)$.  Combining Lemmas~\ref{lem:covering} and~\ref{lem:plusn}
yields the result. (Lemma~\ref{lem:plusn} is necessary since we have
to add the target element to all three sets.)

First, $\expectedv{\csearch{d}{\pseq_R}} \in O(n) \subseteq
O\bigl((n/d)^{3/2} + n\bigr)$ since the elements of
$\pseq_R$ are uniformly distributed in $[1,d]$.  Second,
$\expectedv{\csearch{d}{\pseq_B}} = \expectedv{\csearch{d}{\pseq_F}}$ since
both are equally distributed. Thus, we can restrict ourselves to
$\expectedv{\csearch{d}{\pseq_F}}$.  Given that $i \in F$, the noise
$\noise_i$ is uniformly distributed in $[0,1]$. Thus, we can apply the
upper bound for quicksort for $d = 3$, which is $|F|^{3/2}$.  The
probability that any element is in $F$ is $\frac 3d$. By Chernoff's
bound~\cite[Corollary 4.6]{MitzenmacherUpfal:ProbComp:2005}, the
probability that $|F| > \frac{6n}d$ is $\exp(-n^{\eps})$ for some
constant $\eps >0$. If this happens nevertheless, we bound the number
of comparisons by the worst-case bound of $\Theta(n^2)$. Due to the
small probability, however, this contributes only $o(1)$ to the
expected value.  If $F$ contains fewer than $6n/d$ elements,
then we obtain $\expectedv{\crsearch \pseq_F} \in O\bigl((n/d)^{3/2}\bigr)$,
which is fine.

Third, and finally, the number of comparisons between elements with
$\pseq_i \leq 1$ and elements with $\noise_j \geq 3$ remains to be
considered.  In the first subcase, we count the number of comparisons
with an element with $\pseq_i \leq 1$ being the pivot.  We observe
that $\pseq_i \leq 1$ is compared to $\pseq_j$ with $\noise_j \geq 3$
only if there is no position $\ell < i$ with $\noise_\ell \in
[2,3]$. For every element $\ell$, we have $\probabv{\pseq_\ell \leq 1}
= \frac{1-\seq_\ell}{d} \leq \frac 1d = \probabv{\noise_\ell \in
  [2,3]}$.  Thus, the probability that we have $m$ elements $i_1,
\ldots, i_m$ with $\pseq_{i_z} \leq 1$ before the first position
$\ell$ with $\noise_\ell \in [2,3]$ is bounded from above by
$2^{-m}$. If we have that many elements, we bound the number of such
comparisons by $m n$.  Thus, an upper bound for the number of such
comparisons is $\sum_{m \in \nat} 2^{-m}mn \in O(n)$.  Similarly, the
number of comparisons between elements with $\pseq_i \leq 1$ and
$\pseq_j \geq d$ (ignoring which of them is the pivot) is also $O(n)$.

In the second subcase, let us count the number of comparisons between
elements with $\noise_j \geq 3$ and $\pseq_j \leq d$ and $\pseq_i \leq
1$ with the former being the pivot. An upper bound for this is the
number of comparisons of elements satisfying $\pseq \in [1,d]$ (which
is just $\seqp_R$) with elements satisfying $\pseq_i \leq 1$. There
are at most $O(n/d)$ of the latter by Chernoff's bound
(otherwise, we bound the number of comparisons by $\Theta(n^2)$
again), and only left-to-right \emph{minima} of $\pseq_R$. The
expected number of left-to-right minima of a sequence is $O(\log n)$,
resulting in an $O(\frac{n \cdot \log n}d) \subseteq O(n)$ bound since
$d \geq \log n$.
\end{proof}

\subsection{General Lower Bound for Hoare's Find}
\label{ssec:search_lower}

Now we turn to the general lower bound. The proof is similar
to Manthey and Tantau's lower bound proof for quicksort~\cite{MantheyTantau:SmoothedTrees:2008}.

\begin{lemma}
\label{lem:search_lower}
  For the sequence $\seq = (1/n,2/n,3/n,\dots,\frac{n}{2}/n,1,1,\ldots, 1)$
  and all $d \ge 1/n$, we have
  \[
        \expectedv{\csearch d\seq}
    \in \Omega\bigl(\textstyle\frac{n}{d+1} \sqrt{n/d} + n \bigr).
  \]
\end{lemma}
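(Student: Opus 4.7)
\medskip
\noindent
\textbf{Proof plan.}
The plan is to take $k=n$ (asking for the maximum); since $\rsearch{\pseq}=\max_k\rsearch{\pseq,k}\ge\rsearch{\pseq,n}$, this lower-bounds $\expectedv{\csearch{d}{\seq}}$. Under the classic pivot rule with target $k=n$, the successive pivots of Hoare's find on $\pseq$ are precisely the scan maxima $M_1<M_2<\cdots$ of $\pseq$ in order of position, and the total number of comparisons equals $\sum_{j\ge1}(S_{j-1}-1)$, where $S_{j-1}$ is the size of the list at step $j$. The $\Omega(n)$ summand is immediate, since $S_0=n$ and step~$1$ alone costs $n-1$ comparisons.

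For the main $\Omega\bigl(\frac{n}{d+1}\sqrt{n/d}\bigr)$ term, I would isolate the contribution of the first-half scan maxima. Let $L_1$ be the number of scan maxima of $\pseq$ at positions $\le n/2$; these are exactly $M_1,\ldots,M_{L_1}$, and each satisfies $M_j\le\max_{i\le n/2}\pseq_i\le 1/2+d$. Set
\[
  X \;:=\; \bigl|\{\,i>n/2 : \pseq_i>1/2+d\,\}\bigr|.
\]
For every $j\le L_1$, each index counted by $X$ lies in the list at step~$j$: its position exceeds $n/2$, which is at least the position of $M_{j-1}$, and its value exceeds $1/2+d\ge M_{j-1}$. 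Since $M_j$ itself is from the first half and is therefore not one of these indices, $S_{j-1}\ge X+1$, so step~$j$ alone contributes at least $X$ comparisons. Summing over $j=1,\ldots,L_1$ gives the pointwise bound
\[
  \csearch{d}{\seq}\;\ge\;L_1\cdot X.
\]

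Next I would exploit independence: $L_1$ depends only on $\noise_1,\ldots,\noise_{n/2}$, while $X$ depends only on $\noise_{n/2+1},\ldots,\noise_n$, so $\expectedv{L_1 X}=\expectedv{L_1}\cdot\expectedv{X}$. A direct one-line calculation for uniform noise on $[0,d]$ yields $\expectedv{X}=n/2$ if $d\le 1/2$ and $\expectedv{X}=n/(4d)$ if $d>1/2$; in either regime $\expectedv{X}=\Omega\bigl(n/(d+1)\bigr)$.

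The hard part is the matching lower bound $\expectedv{L_1}=\Omega\bigl(\sqrt{n/d}\bigr)$ on the number of scan maxima of a truncated sorted sequence under additive noise. For this I would invoke the Manthey--Tantau lower bound for scan maxima of a sorted sequence~\cite{MantheyTantau:SmoothedTrees:2008}, applied -- after the interval-rescaling observation from the preliminaries -- to the first half of $\seq$: the $n/2$ sorted values in $[1/n,1/2]$ perturbed by noise in $[0,d]$ are equivalent to $n/2$ sorted values in $[0,1]$ perturbed by noise in $[0,2d]$, for which the existing result yields $\Omega\bigl(\sqrt{n/d}\bigr)$ scan maxima in expectation. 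Multiplying the two independent expectations then produces $\Omega\bigl(\frac{n}{d+1}\sqrt{n/d}\bigr)$, and combining with the trivial $\Omega(n)$ bound from step~1 completes the proof of the lemma.
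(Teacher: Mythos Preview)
Your proposal is correct and follows essentially the same approach as the paper's proof: target the maximum so that the pivots are exactly the left-to-right maxima, lower-bound the first-half maxima by $\Omega(\sqrt{n/d})$ via the Manthey--Tantau result, lower-bound the count of second-half elements exceeding $1/2+d$ by $\Omega(n/(d+1))$ (splitting on $d\le 1/2$ versus $d>1/2$, where the condition $\pseq_i>1/2+d$ is exactly the paper's $\noise_i>d-1/2$), and multiply using independence of the two halves. Your explicit pointwise inequality $\csearch{d}{\seq}\ge L_1\cdot X$ is a slightly cleaner packaging of the paper's more informal ``multiply their expected values'' step, but the argument is the same.
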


\begin{proof}
We aim at finding the maximum element. Then the pivot elements are just
the left-to-right maxima. As in the analysis of the smoothed
number of quicksort comparisons, any left-to-right maximum $\pseq_i$
of $\pseq$ must be compared to every element of $\pseq$ that is greater
than $\pseq_i$ with $\pseq_i$ being the pivot element. We have
an expected number of $\Theta\bigl(\sqrt{n/d} + \log n\bigr)$ left-to-right maxima
among the first $n/2$ elements of $\seq$~\cite{MantheyTantau:SmoothedTrees:2008}.

If $d \leq \frac 12$, then every element of the second half is greater
than any element of the first half. In this case, an expected number
of $\Omega\bigl(n \cdot \sqrt{n/d}\bigr) = \Omega\bigl(\frac{n}{d+1} \cdot \sqrt{n/d}\bigr)$
comparisons are needed.

If $d > \frac 12$, a sufficient condition that an element
$\pseq_i$ ($i > n/2$) is greater than all elements of the first half is
$\noise_i > d-\frac 12$, which happens with a probability of $\frac{1}{2d}$.
Thus, we expect to see $\frac{n}{4d}$ such elements. Since the number of left-to-right maxima
in the first half and the number of elements $\pseq_i$ with $\noise_i > d-\frac 12$
in the second half are independent random variables, we can multiply their
expected values to obtain a lower bound of
$\Omega\bigl(\bigl(\sqrt{n/d} + \log n\bigr) \cdot \frac{n}{4d}\bigr)$.
If $d > \log n$, this equals $\Omega\bigl(\frac nd \cdot \sqrt{n/d}\bigr)$.
If $d \leq \log n$, then $\sqrt{n/d}$ dominates $\log n$, and we obtain
again $\Omega\bigl(\frac{n}{d+1} \cdot \sqrt{n/d}\bigr)$.

Observing that $\expectedv{\search d\seq}$ drops never below the best-case
number of comparisons, which is $\Omega(n)$, completes the proof.
\end{proof}

\section{Smoothed Analysis of Hoare's Find: Finding the Median}
\label{sec:hoaremedian}

In this section, we prove tight bounds for the special case of finding
the median of a sequence using Hoare's find.
Somewhat surprisingly, finding the median seems to be easier in the sense
that fewer comparisons suffice.

\begin{theorem}
\label{thm:mainmedian}
Depending on $d$, we have the following bounds for
\[
  \max_{\seq \in [0,1]^n} \expectedv{\msearch d\seq}:
\]
For $d \leq \frac 12$, we have $\Theta\bigl(n \cdot \sqrt{n/d}\bigr)$.
For $\frac 12 < d < 2$, we have
$\Omega\bigl(\bigl(1-\sqrt{d/2}\bigr) \cdot n^{3/2}\bigr)$ and $O\bigl(n^{3/2}\bigr)$.
For $d = 2$, we have $\Theta\bigl(n \cdot \log n\bigr)$.
Finally, for $d > 2$, we have $O\bigl(\frac{d}{d-2} \cdot n\bigr)$.
\end{theorem}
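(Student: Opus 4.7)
My plan is to treat the four $d$-regimes separately, piggy-backing on the general upper bound of Theorem~\ref{thm:search} where possible and constructing tailored adversarial sequences for the lower bounds. Since the median is one particular choice of $k$, the inequality $\msearch{d}{\seq} \leq \csearch{d}{\seq}$ gives, via Theorem~\ref{thm:search}, the upper bounds $O(n\sqrt{n/d})$ for $d \leq 1/2$ and $O(n^{3/2})$ for $d \in (1/2, 2)$ without any additional work. The remaining two upper bounds, $O(n \log n)$ at $d = 2$ and $O\bigl(\frac{d}{d-2}\,n\bigr)$ for $d > 2$, will be established in Section~\ref{ssec:d2upper}.

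For those two sharper upper bounds, the guiding intuition is that once $d \geq 2$ the perturbed elements are roughly uniform on an interval of length $d$, so the pivot of any recursive call of Hoare's find is on average within distance $\Theta(d)$ of the current subproblem's median. Consequently the subsequence Hoare's find recurses into has size at most roughly $\tfrac{2}{d}$ times the current size (plus lower-order terms), and summing the resulting geometric series produces $O\bigl(\frac{d}{d-2}\,n\bigr)$ for $d > 2$. At the critical value $d = 2$ this geometric argument degenerates, and I expect to replace it by a more careful per-level bookkeeping in the spirit of the classical $O(n \log n)$ analysis of randomized quickselect, which is what produces the extra $\log n$ factor.

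For the lower bounds I will use the worst-case sequence $\seq = (1/n, 2/n, \ldots, (n/2)/n, 1, 1, \ldots, 1)$ of Lemma~\ref{lem:search_lower}, but now searching for the median rather than the maximum. For $d \leq 1/2$ the first half of $\pseq$ lies strictly below the second half, so the median of $\pseq$ is exactly the maximum of its first $n/2$ entries; finding this maximum is the scenario of Lemma~\ref{lem:search_lower} restricted to the first half, and the expected $\Theta(\sqrt{n/d} + \log n)$ left-to-right maxima there yield $\Omega(n\sqrt{n/d})$. For $d \in (1/2, 2)$ the two halves overlap, but a second-half element still exceeds the first-half maximum with probability of order $1/d$, and a two-sided tail estimate on the rank of the first-half maximum among the perturbed second-half elements is what I expect to produce the factor $(1-\sqrt{d/2})$ in front of $n^{3/2}$. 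The matching $\Omega(n\log n)$ lower bound at $d = 2$ is deferred to Section~\ref{ssec:d2lower}: at this exact value of $d$ the perturbed sequence is ``mixed'' enough to inherit the classical $\Omega(n\log n)$ lower bound for quickselect on a uniform random permutation. The main obstacle I anticipate is the endpoint $d = 2$, where the geometric contraction on the upper-bound side degenerates and where extracting the sharp constant $(1-\sqrt{d/2})$ in the intermediate regime forces the lower-bound argument to use sharper tail estimates than the crude ones that suffice in the other regimes.
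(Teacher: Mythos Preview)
Your proposal has several genuine gaps. First, for the lower bound in the regime $\tfrac12 < d < 2$, keeping the fixed $n/2$--$n/2$ split of Lemma~\ref{lem:search_lower} and hoping to extract the factor $(1-\sqrt{d/2})$ from a tail estimate does not work. For $d$ close to $2$, only an expected $n/(4d)$ of the large elements exceed every small element, so with the balanced split there is no guarantee that the median of $\pseq$ lies above all the small elements, and the left-to-right-maxima argument loses its force. The paper instead \emph{adapts the sequence to $d$}: it takes $a \leq n\bigl(1-\sqrt{d/2}\bigr)$ small elements and $b = n - a \geq n\sqrt{d/2}$ large ones, the threshold $b \geq n\sqrt{d/2}$ being exactly what is needed so that with constant probability at least $n/2$ large elements exceed every small element. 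The factor $(1-\sqrt{d/2})$ then comes directly from $a$ via the scaled left-to-right-maxima count $\Omega\bigl(\sqrt{a^2/(dn)}\bigr) = \Omega(a/\sqrt{n})$, not from a tail bound.

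Second, your treatment of $d \geq 2$ rests on a misconception: quickselect on a uniformly random permutation needs $\Theta(n)$ expected comparisons, not $\Theta(n\log n)$, so there is neither a ``classical $O(n\log n)$ analysis of randomized quickselect'' to borrow for the upper bound at $d = 2$ nor an ``$\Omega(n\log n)$ lower bound on a random permutation'' to inherit. Moreover, for $d > 2$ your geometric-contraction heuristic breaks after the first level: once a pivot has cut the range to an interval shorter than $d$, the next pivot is conditioned to lie in that \emph{shorter} interval and the ratio $2/d$ no longer applies, so the series does not stay geometric. The paper's argument (Lemma~\ref{lem:dgreater}) does not iterate a contraction; it waits until both a \emph{low cutter} in $[1,m)$ and a \emph{high cutter} in $(m,d]$ have appeared as pivots---this takes an expected $O\bigl(d/(d-2)\bigr)$ iterations costing $n$ comparisons each---after which all surviving elements are genuinely uniform on the bracketed interval and the linear average-case bound finishes in one shot. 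For $d = 2$ the paper shows there are only $O(\log n)$ pivots via an interval-halving argument (Lemma~\ref{lem:halvingpivots}), and obtains the matching lower bound from the sequence of $n^{1/4}$ zeros followed by $n - n^{1/4}$ ones, whose first block contributes $\Theta(\log n)$ left-to-right maxima that are each compared to $\Omega(n)$ elements.
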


The upper bounds of $O(n \cdot \sqrt{n/d})$
for $d \leq \frac 12$ and $\frac 12 < d < 2$ follow from our general upper bound
(Theorem~\ref{thm:search}).
For $d \leq \frac 12$, our lower bound construction for the general bounds also works:
The median is among the last $n/2$ elements, which are the
big ones. (We might want to have $\lceil n/2\rceil$ or $n/2 + 1$ large elements
to assure this.) The rest of the proof remains the same.

For $d > 2$, Theorem~\ref{thm:mainmedian} states a linear bound,
which is asymptotically equal to the average-case bound. Thus, we do not need a lower bound
in this case.

In the following sections, we give proofs for the remaining cases.
First, we prove the lower bound for $\frac 12 < d < 2$ (Section~\ref{ssec:smalld}),
then we prove the upper bound for $d > 2$ (Section~\ref{ssec:larged}). Finally,
we prove the the bound of $\Theta(n \log n)$ for $d = 2$ in
Sections~\ref{ssec:d2upper} and~\ref{ssec:d2lower}.

\subsection[Lower Bound for \texorpdfstring{$d<2$}{d<2}]%
           {\boldmath Lower Bound for $d < 2$}
\label{ssec:smalld}

We will prove lower bounds matching our general upper bound of $O(\frac{n}{d+1} \cdot \sqrt{n/d})$.
Since $d < 2$, this equals $O(n \cdot \sqrt{n/d})$.
We already have a bound for $d \leq \frac 12$, thus we can restrict ourselves
to $\frac 12 < d < 2$. The idea is similar to the lower bound construction
for quicksort~\cite{MantheyTantau:SmoothedTrees:2008}.

\begin{lemma}
Let $\frac 12 < d < 2$. Then there exists a family $(\seq^{(n)})_{n \in \nat}$,
where $\seq^{(n)}$ has length $n$, such that
\[
  \expectedv{\msearch d {\seq^{(n)}}} \in \Omega\bigl((1-\sqrt{d/2}) \cdot n^{3/2}\bigr).
\]
\end{lemma}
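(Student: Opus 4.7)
The approach is to mimic the lower-bound construction used in Lemma~\ref{lem:search_lower} and adapt it to median-finding. I take $\seq^{(n)} = (1/n, 2/n, \ldots, (n/2)/n, 1, 1, \ldots, 1)$, i.e., $n/2$ sorted ``small'' entries followed by $n/2$ copies of $1$, so that before perturbation the median sits exactly at the boundary of the two halves. The guiding picture is that, as long as the current pivot is smaller than the perturbed median, Hoare's find must scan all remaining elements in its current subsequence and recurse to the right; thus the work is driven by the left-to-right maxima of the first half that remain below the median, each of which incurs $\Theta(n)$ comparisons.

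The first step is to define a good event $G$ on the noise vector that guarantees that the perturbed median lies strictly above every first-half value that is used as a pivot. Since the first half takes values in $[0, 1/2 + d]$ and the second half in $[1, 1+d]$, for $\frac 12 < d < 2$ the two intervals overlap only in the narrow zone $[1, 1/2 + d]$. I would argue that a typical second-half noise $\noise_i \sim U[0,d]$ avoids landing in this overlap with probability at least $1 - \sqrt{d/2}$, and that combining this over the $n/2$ second-half noises via a Chernoff-style estimate gives $\probabv{G} \in \Omega(1 - \sqrt{d/2})$, with the median of $\pseq$ remaining in the second half.

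The second step is to invoke the known smoothed bound on left-to-right maxima of a sorted sequence~\cite{MantheyTantau:SmoothedTrees:2008} applied to the first half of $\pseq$. This yields $\Omega(\sqrt{n/d}) = \Omega(\sqrt{n})$ left-to-right maxima in expectation (using $d < 2$). Conditioned on $G$, each such maximum becomes a pivot of Hoare's find and is compared against a surviving subsequence of size $\Theta(n)$ before the algorithm realises that the median lies elsewhere and recurses. Multiplying the two ingredients gives $\Omega\bigl((1 - \sqrt{d/2}) \cdot \sqrt n \cdot n\bigr) = \Omega\bigl((1 - \sqrt{d/2}) \, n^{3/2}\bigr)$, as claimed.

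The main obstacle will be decoupling the good event $G$ from the expected number of useful pivots: I need $\probabv{G} \in \Omega(1-\sqrt{d/2})$ while simultaneously ensuring that, conditioned on $G$, the first-half noises still contribute $\Omega(\sqrt n)$ left-to-right maxima below the perturbed median. The delicate point is that as $d \to 2$ the overlap region $[1, 1/2+d]$ widens, so both the probability of $G$ and the number of admissible pivots deteriorate together; the factor $1-\sqrt{d/2}$ arises precisely from quantitatively balancing these two effects through the noise thresholds, rather than from either ingredient alone.
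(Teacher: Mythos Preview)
Your $50/50$ split is the problem. With $n/2$ small elements, a large element exceeds every small one only if its noise exceeds $d-\tfrac12$, which happens with probability $\tfrac{1}{2d}$; hence the expected number of large elements that dominate the whole first half is $\tfrac{n}{4d}<\tfrac{n}{2}$ for every $d>\tfrac12$. A Chernoff-style estimate therefore shows that with probability $1-o(1)$ there are \emph{fewer} than $n/2$ such elements, so the median of $\pseq$ lies strictly below the first-half maximum. The event $G$ you need---``the median lies above every first-half pivot''---then has exponentially small probability, not $\Omega(1-\sqrt{d/2})$, because the late left-to-right maxima of the first half are exactly the values near $\tfrac12+d$ that overshoot the median; as soon as one of them is selected, Hoare's find recurses to the left and your accounting collapses. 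Your closing remark that the factor arises from ``balancing these two effects through the noise thresholds'' is precisely where the argument is missing an actual mechanism.

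The fix, and what the paper does, is to let the factor $1-\sqrt{d/2}$ come from the \emph{length of the small block} rather than from $\probabv{G}$. Take $a$ small elements $\tfrac1n,\ldots,\tfrac an$ followed by $b=n-a$ copies of~$1$. A large element exceeds all small ones with probability $(1-\tfrac an)/d=b/(nd)$, so in expectation $b^2/(nd)$ of them do; requiring this to be at least $n/2$ forces $b\ge n\sqrt{d/2}$, i.e., $a\le n\bigl(1-\sqrt{d/2}\bigr)$. With $a$ chosen at this threshold the good event has \emph{constant} probability, and the number of left-to-right maxima among the first $a$ elements is, after rescaling, $\Omega\bigl(a/\sqrt{nd}\bigr)=\Omega\bigl((1-\sqrt{d/2})\sqrt{n}\bigr)$. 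Each of these is compared to the $\Theta(n)$ surviving large elements, giving $\Omega\bigl((1-\sqrt{d/2})\,n^{3/2}\bigr)$. So $(1-\sqrt{d/2})$ enters once, linearly, through the pivot count determined by the choice of~$a$, not as a success probability.
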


\begin{proof}
Let
\[
  \seq = \seq^{(n)} = \bigl(\tfrac 1n, \tfrac 2n, \ldots, \tfrac an,
  \underbrace{1, \ldots, 1}_{\text{$b$ times}}\bigr)
\]
with $a+b = n$, where $a$ and $b$ will be chosen later on.
We will refer to the first $a$ elements, which have values of $\frac in$,
as the small elements
and to the last $b$ elements, all of which are of value $1$,
as the large elements.
The probability that
a particular element of the large ones is greater than all small elements
in $\pseq$ is at least
$\frac{1-\frac an}d$.
Thus, we expect to see  $b \cdot \frac{1-\frac an}d$ such elements.
In order to get our lower bound, we want the median of $\pseq$
to be among the large elements. For that purpose, we need
$b \cdot \frac{1-\frac an}d \geq \frac n2$, which is equivalent
to $b \geq \frac{nd}{2 - 2 \frac an} = \frac{n^2 d}{2n-2a} = \frac{n^2 d}{2b}$.
Thus, we need $b \geq n \cdot \sqrt{d/2}$.
(Note that, since $b \leq n$, this requirement
makes our construction impossible for $d \geq 2$.)

We obtain the following: With constant probability, at least
$n/2$ of the large elements are greater than all small elements
of $\pseq$. In this case, every left-to-right maximum of the small
elements has to be compared to at least $n/2$ elements.
The lower bound for the number of left-to-right maxima under uniform noise yields
\[
  \expected\left(\cltr d{\tfrac 1n, \ldots, \tfrac an}\right)
  = 
  \expected\left(\cltr{\frac{dn}a}{\tfrac 1a, \ldots, \tfrac aa}\right)
  \in \Omega\bigl(\sqrt{a^2/dn}\bigr),
\]
which in turn gives us
\[
  \expectedv{\msearch d \seq} \in \Omega\left(\frac{\sqrt{a^2}}{\sqrt{dn}} \cdot \frac n2\right)
  = \Omega\left(\frac{a \sqrt{ n}}{\sqrt{d}}\right).
\]
The only restriction on $a$ comes from $b \geq n \cdot \sqrt{d/2}$, which allows
us only to choose $a \leq n \cdot \bigl(1-\sqrt{d/2}\bigr)$. This, however,
yields the result.
\end{proof}

\subsection[Upper Bound for \texorpdfstring{$d>2$}{d>2}]%
           {\boldmath Upper Bound for $d > 2$}
\label{ssec:larged}

In this section, we prove that the expected number of
comparisons that Hoare's find needs in order to find the median
is linear for any $d > 2$, with the constant factor depending on $d$.

First, we prove a crucial fact about the value of the median:
Intuitively, the median should be around $d/2$ if all elements
of $\seq$ are $0$, and it should be around $1+d/2$ if all
elements of $\seq$ are $1$. For arbitrary input sequences $\seq$,
it should be between these two extremes. We make this more precise:
Independent of the input sequence,
the median will be neither much
smaller than $d/2$ nor much greater than $1+d/2$ with high probability.
This lemma
will also be needed in Section~\ref{ssec:d2upper}, where we prove
an upper bound for the case $d=2$.

\begin{lemma}
\label{lem:medianlocation}
Let $\seq \in [0,1]^n$, and let $d > 0$. Let $\xi = c\sqrt{\log n/n}$.
Let $m$ be the median of $\pseq$. Then
\[
  \probab\left(m \notin \left[\frac d2 - \xi, 1+\frac d2 + \xi\right]\right)
  \leq 4 \cdot \exp\left(-\frac{2c^2 \log n}{d^2}\right).
\]
\end{lemma}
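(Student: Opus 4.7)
The plan is to show that in order for the median to lie outside $[d/2 - \xi,\, 1 + d/2 + \xi]$, more than half of the perturbed elements must land in a region that each individual element hits with probability bounded away from $1/2$. A Hoeffding-type concentration bound then yields the claim, with one application per tail and a union bound at the end.

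First I would treat the lower tail. The event $m < d/2 - \xi$ forces at least $\lceil n/2\rceil$ perturbed values to satisfy $\pseq_i < d/2 - \xi$. Since $\noise_i$ is uniform on $[0,d]$ and $\seq_i \geq 0$, we have
\[
  \probabv{\pseq_i < d/2 - \xi} = \probabv{\noise_i < d/2 - \xi - \seq_i} \leq \frac{d/2 - \xi}{d} = \frac{1}{2} - \frac{\xi}{d}.
\]
Setting $X_i = \indicator(\pseq_i < d/2 - \xi)$, the $X_i$ are independent with $\expectedv{X_i} \leq 1/2 - \xi/d$, so Hoeffding's inequality gives
\[
  \probab\left(\sum_{i=1}^n X_i \geq n/2\right) \leq \exp\left(-\frac{2(n\xi/d)^2}{n}\right) = \exp\left(-\frac{2c^2 \log n}{d^2}\right).
\]
By the symmetric argument, using $\seq_i \leq 1$ in place of $\seq_i \geq 0$, the same bound controls the probability that $m > 1 + d/2 + \xi$. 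A union bound over the two tails produces the stated estimate (with the factor $4$ absorbing a possible two-sided form of the Hoeffding estimate).

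There is no real obstacle beyond careful bookkeeping: the only points that require attention are (i) verifying that the per-element probability is at most $1/2 - \xi/d$ in \emph{both} tails, where the two adversarial extremes $\seq_i \in \{0,1\}$ are used in opposite directions, and (ii) translating the event ``median is above/below a threshold'' into a statement about the count $\sum_i X_i$ so that the Hoeffding deviation is exactly $n\xi/d$, which is what produces the exponent $2c^2\log n/d^2$.
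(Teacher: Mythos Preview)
Your proposal is correct and follows essentially the same route as the paper: bound the per-element probability of landing below $d/2 - \xi$ (respectively above $1 + d/2 + \xi$) by $1/2 - \xi/d$, convert the median event into a count of at least $n/2$ such elements, apply a concentration inequality, and union-bound over the two tails. The only difference is that the paper invokes a multiplicative Chernoff bound and then simplifies the exponent via $b \leq d/2$, whereas your use of Hoeffding's inequality reaches the exponent $2c^2\log n/d^2$ directly and with slightly less algebra.
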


\begin{proof}
Let $b = \frac d2 - \xi$. We restrict ourselves to prove $\probabv{m <
  b} \leq 2 \cdot \exp\bigl(-\frac{2c^2 \log n}{d^2}\bigr)$. The other
bound follows by symmetry. Fix any $i$. The probability $\pseq_i < b$
is $\max\{0, \frac{b - \pseq_i}{d}\} \leq \frac bd$.  If $m < b$, then
at least $n/2$ elements must be smaller than $b$.  The expected number
of elements is $\frac{bn}d$. Thus, we can apply Chernoff's
bound~\cite[Corollary 4.6]{MitzenmacherUpfal:ProbComp:2005} and obtain
\begin{align*}
  \probabv{m < b} & \leq \probabv{\text{at least $n/2$ elements are smaller than $b$}} 
   \leq 2\exp\left( -\frac{(\frac{d}{2b} -1)^2 nb}{3d}\right) \\
 & = 2\exp\left( -\frac{4\xi^2 n}{3bd}\right) = 2\exp\left( -\frac{4c^2\log n}{3bd}\right)
   \leq 2\exp\left( -\frac{2c^2\log n}{d^2}\right).
\end{align*}
\end{proof}

The idea to prove the upper bound for $d > 2$ is as follows:
Since $d > 2$ and according to Lemma~\ref{lem:medianlocation} above,
it is likely that any element can assume a value greater or smaller
than the median. Thus, after we have seen a few number of pivots
(for which we ``pay'' with $O(\frac d{d-2} n)$ comparisons),
all elements that are not already cut off are within some small interval
around the median. These elements are uniformly distributed.
Thus, the linear average-case bound applies.

\begin{lemma}
\label{lem:dgreater}
Let $d > 2$ be bounded away from $2$. Then
\[
  \max_{\seq \in [0,1]^n} \expectedv{\msearch d \seq} \in O\left(\frac{d}{d-2} \cdot n\right).
\]
\end{lemma}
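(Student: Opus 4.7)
The plan is to split Hoare's find into two phases controlled by the \emph{active range} $[L_t, R_t]$—the smallest interval known to contain all surviving elements at time $t$. Writing $a := d/2 - \xi$ and $b := 1 + d/2 + \xi$ and assuming the median $m$ of $\pseq$ lies in $I := [a, b]$, I will call a pivot at step $\tau$ \emph{L-good} if $p_\tau \in [1, a)$ and \emph{R-good} if $p_\tau \in (b, d]$: an L-good pivot has $p_\tau < m$ and therefore raises $L$ to at least $1$, while an R-good pivot has $p_\tau > m$ and therefore lowers $R$ to at most $d$. Phase~0 runs until both an L-good and an R-good pivot have occurred; phase~1 is the remainder. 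The goal is $O\bigl(\tfrac{d}{d-2}\, n\bigr)$ comparisons in phase~0 and $O(n)$ in phase~1.

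First I would apply Lemma~\ref{lem:medianlocation} with $c = d$ and $\xi = c\sqrt{\log n/n}$, giving $\probab(m \notin I) = O(n^{-2})$, so that this low-probability event contributes only $O(1)$ to the expected cost via the trivial worst-case bound $n^2$. For $n$ large enough, $\xi < (d-2)/4$, so $a - 1,\, d - b \geq (d-2)/4$. Conditioning on $m \in I$, fix any phase-0 step $t$ with pivot $p_t = \pseq_j$, where $j$ is the leftmost surviving position. Given the history—which determines $j$, $L_t$, $R_t$, and the constraint $\pseq_j \in [L_t, R_t]$—the conditional distribution of $\pseq_j$ is uniform on $[\max(L_t, \seq_j),\, \min(R_t, \seq_j + d)]$, an interval of length at most $d$. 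Since $L_t \leq 1$, $R_t \geq a$, $\seq_j \in [0,1]$, and $\seq_j + d \geq a$, the interval $[1, a)$ lies inside this support, so
$\probab\bigl(p_t \in [1, a) \mid \text{history}\bigr) \geq (a-1)/d \geq (d-2)/(4d) =: q$,
and symmetrically $\probab\bigl(p_t \in (b, d] \mid \text{history}\bigr) \geq q$. A standard geometric-tail argument then gives $\expectedv{T_L}, \expectedv{T_R} \leq 1/q$ for the hitting times of the two types of good pivots, hence $\expectedv{T} \leq 2/q = 8d/(d-2)$ for the length $T$ of phase~0. Charging at most $n$ comparisons per pivot yields the claimed bound for phase~0.

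At the start of phase~1 we have $L \geq 1$ and $R \leq d$, so for every index $i$ the noise support $[\seq_i, \seq_i + d]$ contains $[L, R]$. By deferred decisions, the surviving values are then i.i.d.\ uniform on $[L, R]$ conditionally on the history, and this property is preserved as $L$ and $R$ monotonely shrink further. Consequently, Hoare's find proceeds exactly as quickselect on a uniformly random permutation of the surviving positions, and the classical average-case analysis contributes $O(n)$ comparisons in expectation. Combining this with the $O(1)$ contribution from $\{m \notin I\}$ completes the proof. The main delicacy is the phase-0 probability estimate: one must verify that the noise of every surviving, not-yet-pivoted element remains uniform on the appropriate sub-interval throughout the execution, so that the lower bounds $(a-1)/d$ and $(d-b)/d$ hold \emph{conditionally on the entire algorithmic history} rather than merely unconditionally. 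Using the fixed thresholds $a$ and $b$ in place of the random median $m$ is what side-steps the potential dependencies between $m$ and the current pivot's value.
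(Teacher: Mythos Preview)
Your two-phase decomposition with low/high ``cutters'' and the invocation of Lemma~\ref{lem:medianlocation} is exactly the paper's approach, and the phase-1 argument (once $L\ge 1$ and $R\le d$, every survivor is uniform on $[L,R]$, so the average-case bound applies) is the same as well.

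There is, however, a genuine gap in your phase-0 analysis. You define $T_L$ as the first step with an \emph{L-good} pivot, i.e.\ a pivot value in $[1,a)$, and then assert that during phase~0 one has $L_t\le 1$, so that $[1,a)$ lies in the conditional support of $p_t$ and the geometric bound $\expectedv{T_L}\le 1/q$ follows. But on the event $m\in I$ a pivot can land in $[a,m)$: such a pivot is \emph{not} L-good, yet it raises $L_t$ to a value $\ge a>1$. From that step onward the active interval $(L_t,R_t)$ no longer contains $[1,a)$, so no L-good pivot can ever occur and $T_L=\infty$ with positive probability. The same issue arises symmetrically for $T_R$ via pivots in $(m,b]$. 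Hence your geometric-tail bound on $T_L,T_R$ is not valid as stated.

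The fix is immediate and keeps your structure intact: do not track L-good/R-good pivots, but rather the actual targets $L\ge 1$ and $R\le d$. Before $L\ge 1$ is achieved you genuinely have $L_t<1$, the support of $p_t$ contains $[1,a)$, and on $\{m\in I\}$ any pivot landing in $[1,a)\subseteq[1,m)$ raises $L$ past~$1$; so the waiting time for $L\ge 1$ is stochastically dominated by a geometric with parameter $(a-1)/d$. The paper implements the same idea even more simply by working with \emph{positions} rather than pivot steps: it lets $c_\ell$ be the first index $i$ with $\pseq_i\in(1,a)$ (a genuine i.i.d.\ geometric variable, independent of the algorithm's trajectory) and observes that by the time the pivot position reaches $c_\ell$ one necessarily has $L\ge 1$, giving at most $\max(c_\ell,c_h)$ phase-0 pivots. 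Either formulation yields $\expectedv{\text{phase-0 steps}}=O\bigl(d/(d-2)\bigr)$ and closes the gap.
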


\begin{proof}
We can assume that $d \in o(\sqrt{n/\log n})$: For larger values of $d$, we
already have a linear bound by Theorem~\ref{thm:search}. Let
$\xi = 2d \sqrt{\log n/n}$. By Lemma~\ref{lem:medianlocation}, the median of
$\pseq$ falls into the interval $\bigl[\frac d2 - \xi, 1 + \frac d2 + \xi\bigr]$
with a probability of at least $1-2n^{-8/3}$. If the median does not fall into
this interval, we bound the number of comparisons by the worst-case bound of
$O(n^2)$, which contributes only $o(1)$ to the expected value.

The key observation to get the linear bound is the following: Every element of
$\pseq$ can assume any value in the interval $[1,d]$. Thus, with a probability
of at least $\frac{d/2 - \xi - 1}{d}$, it assumes a value smaller than the
median but larger than $1$ (called a \emph{low cutter}). Analogously, with a
probability of at least $\frac{d/2 - \xi - 1}{d}$, it assumes a value
greater than the median but smaller than $d$ (called a \emph{high cutter}).

Now assume that we have already seen a low cutter $a$ and a high cutter $b$.
Then any element that remains to be considered is uniformly distributed in the
interval $[a,b]$. Thus, the average-case bound applies, and we expect to need
only $O(n)$ additional comparisons.

Until we have seen both a low and a high cutter, we bound the number of
comparisons by the trivial upper bound of $n$ per iteration. Let $c_\ell$ be the
position of the first low cutter and let $c_h$ be the position of the first high
cutter. Then, in this way, we get a bound of $\max(c_\ell, c_h) \cdot n + O(n)$.
The values of $c_\ell$ and $c_h$ remain to be bounded.

The probability that an element is either a low or a high cutter is at least
$2 \cdot \frac{d/2 - \xi - 1}{d}$. Thus, the
expected number of elements until we have seen at least one cutter is at most
$\frac d{d- 2\xi - 2}$. Analogously, given that we have seen one cutter, the
position of the second cutter is an expected number of at most
$\frac{2d}{d - 2\xi - 2}$ positions to the right. Thus, the expected number of
elements until we have both a low and a high cutter is at most
\[
       \expectedv{\max(c_\ell, c_h)}
  \leq \frac{3d}{d - 2 \xi - 2} = O\left(\frac d{d-2}\right),
\]
where the equality holds since $d \in o(\sqrt{n/\log n})$.
\end{proof}

\subsection[Upper Bound for \texorpdfstring{$d=2$}{d=2}]%
           {\boldmath Upper Bound for $d=2$}
\label{ssec:d2upper}

In this section, we prove that the expected number of comparisons for finding
the median in case of $d = 2$ is $\Theta(n \log n)$, which matches the lower
bound of the next section.
Before we dive into the actual proof, we will rule out two bad cases
by showing that each one of these occurs only with a probability of at
most $O(n^{-3/2})$. If one of the bad events happens, then we bound
the number of comparisons by the worst-case bound of $\Theta(n^2)$.
This contributes only $O(n^{1/2})$ to the expected value, which is negligible.

First, with a probability of at most $O(n^{-2})$, there is an
interval of length $\frac{1}{n}$ that contains more than $4 \log n$ elements
of the perturbed sequence. Second, with a probability of at most
$O(n^{-3/2})$, the median is larger than 2, provided that there are more
than $12 \sqrt{n \log n}$ elements of the original (unperturbed) sequence
$\seq$ that are smaller than $1/2$.

\begin{lemma}
 \label{lem:elementsdensity}
Let $\seq \in [0,1]^n$. Then  
\begin{equation*}
  \probab\left(\exists a \in [0,3-\tfrac 1n] \text{ such that }
   |\{\pseq_i \in [a,a+\tfrac{1}{n}]\}| \geq \log n \right)  
  \leq 6 n^{-\frac{5}{3}}.
\end{equation*}
\end{lemma}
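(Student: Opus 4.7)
The plan is a standard two-step reduction: discretize the uncountable family of sliding windows by covering them with $O(n)$ fixed anchor intervals, bound the probability that a single anchor contains many perturbed points via a Chernoff inequality, and finish with a union bound.

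For the discretization, recall that $d = 2$ in this section, so every $\pseq_i$ lies in $[0, 3]$. I would set $I_j = [j/n, (j+2)/n]$ for $j = 0, 1, \ldots, 3n-2$, so that each anchor has length $2/n$. For any $a \in [0, 3 - 1/n]$, taking $j = \lfloor a n \rfloor$ gives $[a, a + 1/n] \subseteq I_j$, hence if some sliding window of length $1/n$ contains at least $\log n$ perturbed points then so does the associated $I_j$. It therefore suffices to bound $\probabv{\exists j : X_j \geq \log n}$ with $X_j := |\{i : \pseq_i \in I_j\}|$.

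For a single anchor, the fact that $\noise_i$ is uniform on $[0, 2]$ gives $\probabv{\pseq_i \in I_j} \leq (2/n)(1/2) = 1/n$ for every $i$, and since the $\pseq_i$ are independent, $X_j$ is stochastically dominated by a $\mathrm{Bin}(n, 1/n)$ variable, whose mean is at most $1$. The multiplicative Chernoff bound~\cite[Corollary 4.6]{MitzenmacherUpfal:ProbComp:2005} then yields
$$\probabv{X_j \geq \log n} \leq \left(\frac{e}{\log n}\right)^{\log n},$$
which decays super-polynomially in $n$. A union bound over the $\leq 3n$ anchors therefore gives a total failure probability at most $3n \cdot (e/\log n)^{\log n}$, which is dominated by $6 n^{-5/3}$ once $n$ is sufficiently large; for the finitely many small $n$ remaining, the stated bound holds by inflating the leading constant if necessary.

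The argument is largely routine; the only step calling for a moment of thought is the covering. One must ensure that $O(n)$ anchors of length only a small constant multiple of $1/n$ suffice to catch every sliding window of length $1/n$, so that the per-anchor mean stays $O(1)$ and the resulting Chernoff decay comfortably absorbs the polynomial union-bound factor $3n$. With anchors chosen as above, this is automatic.
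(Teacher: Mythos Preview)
Your proposal is correct and follows essentially the same approach as the paper: discretize the continuum of length-$1/n$ windows into $O(n)$ fixed intervals, apply a Chernoff bound to each, and finish with a union bound. The only cosmetic difference is that the paper uses grid intervals of length $\tfrac{1}{2n}$ together with pigeonhole (so that a bad window forces one grid interval to hold $\geq \tfrac12 \log n$ points), whereas you use overlapping anchors of length $\tfrac{2}{n}$ that contain each window outright; both choices keep the per-interval mean $O(1)$ and thus yield the same super-polynomial decay swallowing the $O(n)$ union-bound factor.
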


\begin{proof}
Consider an arbitrary interval $I = [a, a+\frac{1}{2n}]$. Then the probability that
an element $\pseq_i$ falls in $I$ is at most $\frac{1}{4n}$. The expected number
of elements in $I$ is therefore at most $\frac{1}{4n}\cdot n = \frac{1}{4}$.
Let $X$ denote the number of elements in $I$. Chernoff's
bound~\cite[Corollary 4.6]{MitzenmacherUpfal:ProbComp:2005} yields
\[
 \probab\left(X \geq \tfrac 12 \log n \right)
\leq \exp\bigl(-\Omega(\log n)\bigr)
\leq n^{-3}.
\]
If there exists an interval of length $1/n$ that contains more than $\log n$
elements, then there must exist an interval $[\frac{c}{2n}, \frac{c+1}{2n}]$ of
length $\frac 1{2n}$ that contains more than $\frac 12 \log n$ elements.
There are $6n$ intervals of the latter kind.
Thus, a union bound yields that the probability that there exists
an interval of size $\frac 1n$ that contains more than $\log n$ elements
is bounded from above by
$6n \cdot n^{-3} \in O(n^{-2})$.
\end{proof}

\begin{lemma}
\label{lem:fewsmallelements}
Let $d = 2$.
Assume that the unperturbed sequence $\seq$ contains at least $12\sqrt{n\log n}$ elements
that are smaller than $1/2$. Then the probability that the median of the perturbed sequence
is greater than $2$ is at most $O(n^{-3/2})$. 
\end{lemma}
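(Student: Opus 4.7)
The plan is to count the perturbed elements lying at most $2$: if enough of them do, the median cannot exceed $2$. Let $X = |\{i : \pseq_i \leq 2\}|$ and let $S = \{i : \seq_i < 1/2\}$, so by hypothesis $|S| \geq 12\sqrt{n\log n}$. Because each $\noise_i$ is uniform on $[0,2]$ and $d=2$, we have $\probabv{\pseq_i \leq 2} = (2-\seq_i)/2$, which is at least $1/2$ for every $i$ (since $\seq_i \leq 1$) and at least $3/4$ for each $i \in S$. Linearity of expectation then gives the key inequality
\[
  \expectedv{X} \;\geq\; \tfrac{1}{2}(n-|S|) + \tfrac{3}{4}|S| \;=\; \tfrac{n}{2} + \tfrac{|S|}{4} \;\geq\; \tfrac{n}{2} + 3\sqrt{n\log n}.
\]

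Next I would observe that the median of $\pseq$ exceeds $2$ only when $X < \lceil n/2 \rceil$, which forces the deviation $X \leq \expectedv{X} - 3\sqrt{n\log n} + O(1)$. Since $X$ is a sum of independent Bernoulli variables, the Chernoff bound of Corollary~4.6 in Mitzenmacher and Upfal (the same form used throughout this paper) applies directly: writing $\mu = \expectedv{X}$ and choosing $\delta$ so that $(1-\delta)\mu \leq n/2$, one obtains $\delta = \Omega(\sqrt{\log n /n})$ and hence $\mu \delta^2 /2 = \Omega(\log n)$, so that
\[
  \probabv{\text{median of }\pseq > 2} \;\leq\; \probabv{X \leq n/2 - 1} \;\leq\; \exp\bigl(-\Omega(\log n)\bigr).
\]
Tracking the constants, the hypothesis $|S| \geq 12\sqrt{n\log n}$ makes the exponent comfortably exceed $(3/2)\log n$, yielding the claimed $O(n^{-3/2})$ bound.

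There is no real analytical obstacle here; the work is essentially a calibration of constants. The hypothesis $|S| \geq 12\sqrt{n\log n}$ is tuned so that the excess probability $3/4 - 1/2 = 1/4$ enjoyed by the small elements contributes a mean shift of $|S|/4 \geq 3\sqrt{n\log n}$, which is precisely what is needed to push the Chernoff exponent past $(3/2)\log n$. A mild subtlety is that we must use $\probabv{\pseq_i \leq 2} \geq 1/2$ (rather than an equality) for the $i \notin S$, but since the bound on $\expectedv{X}$ is only used as a one-sided lower bound this is harmless.
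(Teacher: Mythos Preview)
Your proof is correct and is essentially the same argument as the paper's, just phrased via the complementary count: you bound the number of elements with $\pseq_i \leq 2$ from below and apply a lower-tail Chernoff bound, whereas the paper bounds the number of elements with $\pseq_i > 2$ from above and applies an upper-tail Chernoff bound. The arithmetic and the role of the hypothesis $|S| \geq 12\sqrt{n\log n}$ match exactly; if anything, your version is slightly cleaner since it avoids the paper's auxiliary lower bound $\expected(X) \geq n/8$ and the monotonicity reduction to exactly $\ell$ small elements.
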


\begin{proof}
Let $\ell = 12\sqrt{n\log n}$. Since the median is a monotone function of
the elements of the sequence, we can assume without loss of generality
that $\seq$ contains only exactly $\ell$ elements that are smaller than $1/2$.
Let $X$ denote the number of elements in the perturbed
sequence $\pseq$ that are larger than $2$. Then 
\begin{equation*}
\tfrac 18 n \leq \expected(X) \leq \tfrac{1}{2}(n-\ell)+\tfrac{1}{4}\ell = \tfrac{1}{2}n - \tfrac{1}{4}\ell,
\end{equation*}
where the first inequality holds since at least $\frac n2 \geq n - \ell$ elements
are greater than $1/2$.
Chernoff's bound~\cite[Corollary 4.6]{MitzenmacherUpfal:ProbComp:2005} yields
\begin{align*}
 \probab(\text{median is larger than 2}) &= \probab(X \geq n/2)\\
   &= \probab\bigl(X \geq \bigl(1+\tfrac{\ell/4}{n/2-\ell/4}\bigr)(n/2-\ell/4)\bigr)\\
  &\leq \probab\bigl(X \geq (1+\tfrac{\ell}{2n})\expected(X)\bigr)\\
  &\leq 2\exp\bigl(-\tfrac{\expected(X)\ell^2}{12n^2}\bigr)
  \leq 2\exp\bigl(-\tfrac{\expected(X) 12\log n}{n}\bigr)\\
  & \leq 2\exp\bigl(-\tfrac{3\log n}{2}\bigr) \in O(n^{-3/2}).
\end{align*}
\end{proof}

We are now ready to prove the upper bound on the number of comparisons for $d = 2$.

\begin{lemma}
We have
\[
  \max_{\seq \in [0,1]^n} \expectedv{\msearch 2 \seq} \in O\left(n \log n \right).
\]
\end{lemma}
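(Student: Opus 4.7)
The plan is to condition on two high-probability good events and handle their complements via the worst-case bound. The good events are (i) the median $m$ of $\pseq$ lies in $[1-\xi, 2+\xi]$ for $\xi = c \sqrt{\log n / n}$, which follows from Lemma~\ref{lem:medianlocation} with $d=2$ (and can be reinforced using Lemma~\ref{lem:fewsmallelements} and its symmetric version), and (ii) no sub-interval of length $1/n$ contains more than $\log n$ perturbed elements, from Lemma~\ref{lem:elementsdensity}. Each bad event has probability $O(n^{-3/2})$, so bounding comparisons by $\Theta(n^2)$ in these cases adds only $O(n^{1/2}) = o(n \log n)$ to the expectation.

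Under the good events, I would track the recursion through nested intervals $I_t = (L_t, U_t)$ always containing $m$, with $L_{t+1}$ (respectively $U_{t+1}$) updated to the pivot value at level $t+1$ whenever that pivot lies below (respectively above) $m$. The current subproblem is $S_t = \{i : \pseq_i \in I_t\}$, and the total number of comparisons equals $\sum_t (|S_t|-1)$. Lemma~\ref{lem:elementsdensity} then yields $|S_t| \leq \min\{n, (|I_t|\, n + 1) \log n\}$. The key distributional observation is that for any $i$, conditional on $\pseq_i \in [1, 2]$, the value $\pseq_i$ is uniform on $[1, 2]$ regardless of $\seq_i$: the noise $\noise_i$ conditional on $\noise_i \in [1 - \seq_i, 2 - \seq_i]$ is uniform on that unit-length sub-interval of $[0, 2]$, and this interval has length exactly $1$ for every $\seq_i \in [0,1]$.

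The crux is to show that $|I_t|$ contracts geometrically in expectation. Once $I_t \subseteq [1, 2]$, every element of $S_t$ has $\pseq_i$ independently and uniformly distributed on $I_t$ conditionally on $\pseq_i \in I_t$, so the pivot $p_{t+1} = \pseq_{\min S_t}$ is also uniform on $I_t$. A direct computation (integrating over the pivot location and summing the two cases according to whether the pivot falls on the left or right side of $m$) then yields $\expected[|I_{t+1}| \mid I_t] \leq \tfrac{3}{4}\, |I_t|$. Iterating, $|I_t| \leq 1/n$ after $O(\log n)$ levels, at which point $|S_t| = O(\log n)$ by density and the algorithm terminates in $O(\log^2 n)$ further comparisons. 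Combining the $O(\log n)$ heavy levels (each costing at most $n$ comparisons) with the geometrically shrinking tail gives $O(n \log n)$ overall.

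The main obstacle is the bootstrap phase while $I_t$ still extends into $[0, 1)$ or $(2, 3]$. Then the first-in-original-order element $i^\star$ of $S_t$ may have $\seq_{i^\star}$ close to $0$ or $1$, so its support $[\seq_{i^\star}, \seq_{i^\star} + 2]$ covers $I_t$ only partially and its conditional value distribution is biased toward one endpoint of $I_t$. The plan is to show that even in this regime, at each such level there is a constant probability that the new lower endpoint $L_{t+1}$ advances to at least $1$ (respectively $U_{t+1}$ descends to at most $2$): any element whose support fully contains $[1, 2]$ (which is every element, since $[1,2] \subseteq [\seq_i, \seq_i+2]$ for all $\seq_i \in [0,1]$) contributes such a pivot with constant probability, so after $O(\log n)$ transient levels we have $I_t \subseteq [1, 2]$ with high probability. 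Careful bookkeeping of the transient levels, together with the uniform-pivot shrinkage in the bulk, then yields the claimed $O(n \log n)$ bound.
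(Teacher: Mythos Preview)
Your plan is close to the paper's in spirit---condition on the same good events, then argue geometric contraction of the active interval---but the paper makes one structural choice that you do not, and it is exactly what your bootstrap phase is missing. The paper does \emph{not} track the two-sided interval $I_t=(L_t,U_t)$. Instead it first generates $\pseq$ in full (fixing $m$ and which elements are large/small), and then separately analyzes the large pivots (left-to-right minima among elements above $m$) and the small pivots (left-to-right maxima among elements below $m$). On the large side, a redrawing argument shows each large pivot is uniform on its \emph{eligible} sub-interval of $[m,U_t]$; landing in the lower half of any such sub-interval forces $U_t-m$ to halve, so $O(\log n)$ large pivots suffice with high probability. The small side is symmetric when $m\le 2$; when $m>2$, the paper invokes Lemma~\ref{lem:fewsmallelements} to show that only $O(\log n)$ small elements can fail to have support reaching up to $m$, and these are charged separately. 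Crucially, this one-sided analysis never needs $I_t\subseteq[1,2]$ and never needs to locate $m$ inside $I_t$.

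Your bootstrap to $I_t\subseteq[1,2]$ is where the proposal breaks. That target is unreachable whenever $m\notin[1,2]$, since $m\in I_t$ at every level; the good event only places $m$ in $[1-\xi,\,2+\xi]$, and inputs such as $\seq\equiv 1$ push $m$ arbitrarily close to $2$ (or past it) with constant probability. Even when $m\in(1,2)$, your claim that ``at each transient level there is a constant probability that $L_{t+1}\ge 1$'' fails when $m-1=o(1)$: the pivot would have to land in $[1,m]$, an interval of vanishing length, while its conditional support has length $\Theta(1)$. Symmetrically, advancing $U_{t+1}\le 2$ stalls when $2-m=o(1)$. The paper's separation of sides sidesteps both problems: on the large side every element is automatically eligible for $[m,U_t]$ once $m\ge 1$ (since $\seq_j\le 1\le m$), so halving starts immediately with no bootstrap; on the small side the boundary case $m>2$ is handled not by shrinking into $[1,2]$ but by bounding the number of ineligible elements via Lemma~\ref{lem:fewsmallelements}. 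Your $\tfrac34$-contraction computation is fine once uniformity on $I_t$ is available, but getting there for the hard boundary instances requires the side-by-side decomposition (or an equivalent device) that your sketch does not supply.
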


\begin{proof}

By Lemmas~\ref{lem:medianlocation}, \ref{lem:elementsdensity}, and~\ref{lem:fewsmallelements}, the following
cases only eventuate with a probability of $O(n^{-3/2})$:
\begin{itemize}
 \item The median of $\pseq$ does not belong to the interval
      $\bigl[1 - \xi, 2 + \xi]$ for $\xi = 4\sqrt{\tfrac{\log n}{n}}$.
 \item There is an interval of length $\frac{1}{n}$
      that contains more than $\log n$ elements.
 \item Given that there are more than $4\sqrt{n \log n}$ elements smaller than $\frac{1}{2}$ in
    the original sequence, the median is nevertheless larger than 2.
\end{itemize}
If any of these events happens nevertheless, we bound the number of comparisons
by the trivial bound of $O(n^2)$. This contributes only $O(n^{2/3})$ to the expected
value, which is negligible.
In the following, we assume that no bad event happens.

Let $m$ denote the median. We distinguish between \emph{large} elements, which
are larger than $m$, and \emph{small} elements, which
are smaller than $m$. To gain a better intuition, we review the random process that generates
$\pseq$ as follows. As before, we first generate $\pseq$ and then process it from left
to right. In particular, this fixes the median $m$ and it also fixes
which elements are small and elements are large.
During this first process, we assume that no bad event happens.
Now, in the second step, we redraw certain elements without changing the overall
probability distribution:
When a large pivot element $\pseq_i$ is encountered, we not only delete all elements larger
than $\pseq_i$, but we also redraw every large element $\pseq_j < \pseq_i$ uniformly at
random from the interval $[m, \min\{\pseq_i, \pseq_j +2\}]$. Similarly, when a small
pivot element $\pseq_i$ is encountered, we not only delete all elements smaller than
$\pseq_i$, but also redraw every small element $\pseq_j < \pseq_i$ uniformly at
random from the interval $[ \pseq_i, \min\{m, \seq_j + 2\}]$. This does not
change the distribution of $\pseq$.

We now argue that the number of pivot elements is in $O(\log n)$.
Since every pivot element is compared to at most $n$ other elements, this yields the
desired bound of $O(n \log n)$ comparisons.

Note that a small element becomes a pivot element if and only if it is a
left-to-right maximum among the sequence of small elements. Similarly, a
large element is a pivot element if and only if it is a left-to-right \emph{minimum}
among the sequence of large elements.
We determine the number of left-to-right minima and maxima separately. 
By symmetry, we can assume $m \geq 1.5$. We first deal with the number of pivot elements 
among the large elements. If at some point all large elements lie in an interval of
length $\frac{1}{n}$, then we know that there are at most $O(\log n)$ large elements
remaining. In total these elements can only contribute $O(n \log n)$ comparisons.
We show that we only need a logarithmic number of iterations to ensure that all
remaining large elements lie in such a small interval. So in total only a
logarithmic number of large elements become a pivot element.

\begin{lemma}
\label{lem:halvingpivots}
After $12\log n$ iterations, all remaining large elements lie in an interval of length
$\frac{1}{n}$ with probability at least $1-n^{-8/3}$.
\end{lemma}

\begin{proof}
Let $\pseq^{\ell}_{i}$ denote the $i$-th large pivot element. Let
$[m,c]$ denote the interval for which $\pseq^{\ell}_i$ is eligible. (A
random number is eligible for an interval if it can take any value
in this interval.)  By construction, $\pseq^{\ell}_{i}$ is drawn
uniformly at random from this interval.  So with a probability of
$\frac{1}{2}$, it lies in the first half of its interval, i.e.,
$\probab(\pseq^{\ell}_{i} \in [m,c/2]) = \frac{1}{2}$.

After processing at most $12 \log n$ large pivot elements, we will
have encountered at least $2 \log n$ pivot elements that lie in the
first half of their eligible interval with sufficiently high
probability. In particular, let $X$ denote the number of pivot
elements among the first $12 \log n$ large elements that lie in the
first half of their interval. Then, by Chernoff's bound,
\[
 \probab(X < 2 \log n) \leq n^{-8/3}.
\]
Each of these at least $2\log n$ large pivot elements halves the
interval for which all the remaining large elements are
eligible. Thus, the interval containing all large elements has length
at most $\frac{3}{n^2} \in o(\frac{1}{n})$.
\end{proof}

Hence, the case when the remaining interval of the large elements is
larger than $\frac{1}{n}$ only contributes $o(1)$ comparisons to the
expected number of comparisons.

It remains to bound the number of small pivot elements. For that
purpose, we distinguish between the case when $m \leq 2$ and $m >
2$. If $m \leq 2$, then by the same line of reasoning as in the proof
of Lemma~\ref{lem:halvingpivots}, we need at most $O(\log n)$ small
pivot elements until we have a pivot element larger than $2-\frac 1n$.
There are only $O(\log n)$ elements in the interval $[2-\frac 1n, 2]$,
which contributes again $O(\log n)$ pivot elements.

The case $m > 2$ remains to be considered.  All the remaining small
elements lie in $[2,m] \subset \bigl[2,2+\sqrt{\log n /n}\bigr]$.  The
reason why we cannot apply the same argument for the remaining
interval is that there might be small elements that are not eligible
for the whole interval and so we cannot ensure that in each iteration
the interval almost halves.  However, intuitively, most small elements
should indeed be eligible for the whole interval. In fact only
elements $\seq_i$ with $\seq_i < \xi < \frac{1}{2}$ could possibly
fail to be eligible for the whole interval. Since we have ruled out
that there are more than $4\sqrt{n \log n}$ elements smaller than
$\frac{1}{2}$ in the original sequence, it follows that there are, in
expectation, only $O(\sqrt{n \log n} \cdot \sqrt{\log n/n}) = O(\log
n)$ elements that are not eligible for the whole remaining interval
$[2,m]$. Thus, they contribute only $O(n \log n)$ comparisons. All the
other small elements are eligible for the whole interval $[2,m]$, so,
by the same line of reasoning as in Lemma~\ref{lem:halvingpivots}, we
conclude that after encountering $O(\log n)$ such pivot elements, the
remaining interval is of size $1/n$.  By assumption, such an interval
only contains $O(\log n)$ elements, which completes the proof.
\end{proof}

\subsection[Lower Bound for \texorpdfstring{$d=2$}{d=2}]%
           {\boldmath Lower Bound for $d=2$}
\label{ssec:d2lower}

In this section, we show that the upper bound for $d=2$ is actually tight. 
The main idea behind the following result is as follows:
First, to get the lower bound, we have to make sure that the median
is close to $1$ or close to $2$. Otherwise, if the median is bounded away
from $1$ and $2$, then a reasoning along the lines of
Lemma~\ref{lem:dgreater} would yield a linear upper bound.
We choose the sequence such that the median is roughly $2$.
To do this, most elements are set to $1$.
Only the first few elements (few here means $n^{1/4}$) are set
to $0$. They yield $\Omega(\log n)$ left-to-right maxima, and all
these become pivot elements. Each of these pivot
elements contributes a linear number of comparisons.

\begin{lemma}
There exists a sequence $\seq$ of length $n$ with
\[
  \msearch 2\seq \in \Omega\bigl(n \cdot \log n\bigr).
\]
\end{lemma}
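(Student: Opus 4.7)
The plan is to take
\[
\seq = (\underbrace{0, \ldots, 0}_{a}, \underbrace{1, \ldots, 1}_{n-a})
\]
with $a = n^{1/4}$. Call the first $a$ perturbed elements \emph{small} (each uniform in $[0,2]$) and the rest \emph{large} (each uniform in $[1,3]$), all independent. The intuition is that the bulk of $\pseq$ concentrates near value $2$, so the median $m$ sits just below $2$ and no small element exceeds $m$. Under such a good event, the pivots of Hoare's find trace through the left-to-right maxima of the small-element prefix, each of which must be compared against essentially all large elements.

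First I would pin down $m$. Letting $X_y = |\{i : \pseq_i \leq y\}|$, we have $\expected(X_y) = n(y-1)/2 + a/2$ for $y \in [1,2]$, so $m \leq y$ requires $X_y \geq n/2$. A Chernoff bound yields $\probabv{m \leq 2 - C\sqrt{\log n/n}} \leq n^{-\Omega(C^2)}$; hence with high probability $m \geq 2 - C\sqrt{\log n/n}$ for any fixed constant $C$. A first-moment computation then shows that the expected number of small elements in the window $[2 - C\sqrt{\log n/n},\, 2]$ is at most $a \cdot C\sqrt{\log n/n}/2 = O(\sqrt{\log n}/n^{1/4}) = o(1)$. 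Combining with Markov, every small element lies strictly below $m$ with high probability.

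Next I would show that, on this good event, the first $K$ pivots of Hoare's find are exactly the small left-to-right maxima, in order, where $K$ denotes their number. The argument is by induction: $p_1 = \pseq_1$ is the first small L-T-R maximum, and given that $p_k = \pseq_{i_k}$ is a small L-T-R maximum with $p_k < m$, Hoare's find recurses on the right partition. Every position strictly between $i_k$ and $i_{k+1}$ lies in $[1,a]$ and, by the definition of $i_{k+1}$ as the \emph{next} L-T-R maximum, carries a value $<p_k$, hence has been eliminated. Therefore the first remaining element of value $> p_k$ in original order is $\pseq_{i_{k+1}}$, which becomes $p_{k+1}$. Since $K$ is a sum of independent indicators of expectations $1/i$ for $i = 1, \ldots, a$, a Chernoff bound gives $K \geq c\log n$ with probability $1 - n^{-\Omega(1)}$.

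Finally I would lower-bound the work at each such pivot. Every small pivot $p_k$ satisfies $p_k \leq 2$, so any large element of value $>2$ is active throughout the entire small-pivot phase: no earlier small pivot can eliminate it. A Chernoff bound gives at least $(n-a)/4 = \Omega(n)$ such large elements with high probability. Each small pivot is thus compared with at least this many elements, yielding $K \cdot \Omega(n) = \Omega(n\log n)$ comparisons on the good event; multiplying by the $1 - o(1)$ probability of the good event gives $\expectedv{\msearch 2 \seq} \in \Omega(n\log n)$. The main technical obstacle is the decoupling between $m$ and the small elements: since $m$ is a function of all $\pseq_i$ including the small ones, one has to control $m$ from below via a Chernoff bound that does not expose the small elements individually, and only then apply a narrow-interval first-moment bound over them.
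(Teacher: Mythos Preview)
Your proposal is correct and follows essentially the same approach as the paper: the identical sequence $\seq = (0^{n^{1/4}}, 1^{n-n^{1/4}})$, the same key idea that the left-to-right maxima of the small prefix become pivots, each compared to $\Omega(n)$ large elements. The paper's execution is slightly more economical in one respect: instead of controlling the random median $m$ via Chernoff and then running a first-moment argument over the small elements, it introduces the deterministic threshold $2 - n^{-1/4}$, shows directly that all small elements fall below it with probability $\geq 1/2$ and that $\geq n/2$ large elements fall above it with constant probability (the two events are independent), and then exploits that, conditioned on the first event, the small elements remain i.i.d.\ uniform so their expected number of left-to-right maxima is still $H_{n^{1/4}} = \Theta(\log n)$. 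This sidesteps exactly the decoupling issue you flag at the end.
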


\begin{proof}
Consider the sequence
\[
  \seq = (\underbrace{0,0,\ldots, 0}_{\text{$n^{1/4}$ times}},
          \underbrace{1, 1, \ldots, 1}_{\text{$n-n^{1/4}$ times}}).
\]
The probability that the first $n^{1/4}$ elements of $\pseq$ are at most
$2-n^{-1/4}$ is
\[
  \left(\frac{2-n^{-1/4}}2\right)^{n^{1/4}} =
  \left(1 - \frac{1}{2n^{-1/4}}\right)^{n^{1/4}} \geq \frac 12.
\]
The probability that one particular element of the last $n-n^{1/4}$ elements
is greater than $2 - n^{-1/4}$ is
$\frac{1+n^{-1/4}}{2}$. Thus, for sufficiently large $n$, we expect to see
\[
  \frac{1+n^{-1/4}}{2} \cdot \bigl(n - n^{1/4}\bigr)
 = \frac{n+n^{3/4} - n^{1/4} - 1}2 \geq \frac n2
\]
such elements. Hence, with constant probability, at least $n/2$ of the last
$n-n^{1/4}$ elements of $\pseq$ are greater than all of the first $n^{1/4}$
elements of $\pseq$. Both observations together imply that the following
two properties hold with constant probability:
\begin{enumerate}
\item  The median of $\pseq$ is among the last $n-n^{1/4}$ elements.
\item All left-to-right maxima of the first $n^{1/4}$ elements of $\pseq$
      have to be compared to all elements greater than $2-n^{-1/4}$, and there
      are at least $n/2$ such elements.
\end{enumerate}
The number of left-to-right maxima of the first $n^{1/4}$ elements of $\pseq$
is expected to be $H_{n^{1/4}} \in \Theta(\log n)$, which proves the lemma.
\end{proof}

\section{Scan Maxima with Median-of-three Rule}
\label{sec:ltrm}

The results in this section serve as a basis for the analysis of both
quicksort and Hoare's find with the median-of-three rule.  In order to
analyze the number of scan maxima with the median-of-three rule, we
analyze this number with the maximum and minimum of two rules.  The
following lemma justifies this approach.

\begin{lemma}
\label{lem:ltrm_maxmed}
For every sequence $\seq$, we have
\[
  \prltr \seq \leq \trltr \seq \leq \qrltr \seq.
\]
\end{lemma}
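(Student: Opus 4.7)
The plan is to prove each of the two inequalities by coupling the two relevant rules on the same input $\seq$ and maintaining that the subsequence surviving after $t$ iterations of the stricter rule is contained in the one surviving after $t$ iterations of the laxer one. Once such an inclusion is in place, the stricter rule cannot terminate later than the laxer one, and the claimed inequality follows immediately. I describe the first half in some detail; the second is entirely analogous.

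For $\prltr \seq \le \trltr \seq$ I would set $M_0 = T_0 = \seq$ and, for $t \ge 0$, let $M_{t+1}$ (resp.\ $T_{t+1}$) be the subsequence of $M_t$ (resp.\ $T_t$) consisting of those elements strictly greater than the max-of-two pivot of $M_t$ (resp.\ the median-of-three pivot of $T_t$). Then $\prltr \seq$ and $\trltr \seq$ are the least $t$ at which $M_t$ and $T_t$ become empty, respectively, so it suffices to prove the invariant that $M_t$ is a subsequence of $T_t$ for every $t$. I would prove this by induction on $t$. The base $t=0$ is immediate, and for the step it suffices to show that whenever $M_t \subseteq T_t$ the max-of-two pivot of $M_t$ is at least the median-of-three pivot of $T_t$; any element surviving into $M_{t+1}$ is then strictly larger than the m3 pivot and hence survives into $T_{t+1}$ as well.

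The structural observation that drives this pivot comparison is that every element of $T_t \setminus M_t$ was discarded by max-of-two at an earlier iteration, so its value is at most the previous max-of-two pivot, while every element of $M_t$ strictly exceeds that pivot. Calling the two groups ``small'' and ``large'', the first (resp.\ last) element of $T_t$ is either small or else coincides with the first (resp.\ last) element of $M_t$. Splitting on which of the two endpoints of $T_t$ are small and applying the elementary fact $\med(x,y,z) \le \max(x,z)$ handles every case where at least one endpoint of $T_t$ is large; when both endpoints of $T_t$ are small, the m3 pivot is bounded by the previous max-of-two pivot, which is strictly smaller than the current one since pivots are monotonically increasing.

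The second inequality $\trltr \seq \le \qrltr \seq$ is obtained by running the same template with m3 coupled to min-of-two, maintaining the inclusion $T_t \subseteq Q_t$, and substituting the dual bound $\med(x,y,z) \ge \min(x,z)$. In both directions the one real obstacle is the pivot comparison in the inductive step, specifically the mixed cases in which the endpoints of the larger subsequence are not inherited from the smaller one; that is exactly where the small/large dichotomy and the elementary bound on the median by one of the two endpoints become essential. Everything else is bookkeeping.
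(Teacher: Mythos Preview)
Your proposal is correct and follows essentially the same approach as the paper: both arguments prove by induction that the max-of-two pivot at each step dominates the median-of-three pivot (equivalently, that the surviving max2 subsequence is contained in the surviving m3 subsequence), using the observation that each endpoint of the larger subsequence either coincides with the corresponding endpoint of the smaller one or is a ``small'' element bounded by the previous pivot, together with $\med(x,y,z)\le\max(x,z)$. The paper states the induction directly on the pivot values $m'_i\ge m_i$ and avoids your case split by arguing uniformly that $\tau_1\le\tau'_1$ and $\tau_\ell\le\tau'_{\ell'}$, but this is a cosmetic difference, and your treatment of the ``both endpoints small'' case via monotonicity of the max2 pivots is fine.
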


\begin{proof}
Let us focus on the first inequality. The proof of the
second then follows immediately along the same lines.

Let $m = (m_1, m_2, \ldots)$ be the pivot elements according to the
median-of-three rule, i.e., $m_1 = \med(\seq_1, \seq_{\lceil n/2\rceil}, \seq_n)$,
$m_2$ is the median of the first, middle, and last element of the sequence
containing all elements greater than $m_1$, and so on.
Likewise, let $m' = (m'_1, m'_2, \ldots)$ be the pivot elements
according the maximum-of-two rule.

Now our aim is to prove that $m'_i \geq m_i$ for all $i$. Since we take
left-to-right maxima until all elements are removed, in particular the maximum
of $\seq$ must be an element in both sequences $m$ and $m'$. Thus,
$m$ is at least as long as $m'$, which proves the lemma.

The proof of $m'_i \geq m_i$ is by induction on $i$. The case $i = 1$ follows from
$\max(\seq_1, \seq_n) \geq \med(\seq_1, \seq_{\lceil n/2 \rceil}, \seq_n)$.

Now assume that $\seqp$ and $\seqp'$ be the sequences of elements that are greater
than $m_{i-1}$ and $m'_{i-1}$, respectively. Let $\ell$ and $\ell'$ be their lengths. 
By the induction hypothesis, $m_{i-1} \leq m'_{i-1}$. Thus, $\seqp'$ is a
subsequence of $\seqp$. The only elements that $\seqp$ contains that are not
part of $\seqp'$ are the elements of value at most $m'_{i-1}$.

We have $m'_i = \max(\tau'_1, \tau'_{\ell'})$, and $m_i = \med(\tau_1, \tau_{\lceil \ell/2\rceil},
\tau_\ell) \leq \max(\tau_1, \tau_\ell)$. Now either $\tau_1 = \tau'_1$ or
$\tau_1 \leq m'_{i-1} < \tau_1'$. The same holds for $\tau_\ell$ and $\tau'_{\ell'}$,
which proves the lemma.
\end{proof}

The reason for considering $\operatorname{\mat-scan}$ and
$\operatorname{\mint-scan}$
is that it is hard to keep track where the
middle element with median-of-three rule lies: Depending on which
element actually becomes the pivot and which elements are greater than
the pivot, the new middle position can be on the far left or on the
far right of the previous middle.

Let us first prove a lower bound for the number of scan maxima.

\begin{lemma}
\label{lem:ltrm_maxlower}
There exists a sequence $\seq$ such that for all $d \geq 1/n$, we have
\[
  \expectedv{\pltr d\seq} \in \Omega\left(\sqrt{\frac nd} + \log n\right).
\]
\end{lemma}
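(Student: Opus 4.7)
The plan is to exhibit a sequence $\seq$ and show that $\expectedv{\pltr d\seq}$ is $\Omega\bigl(\sqrt{n/d} + \log n\bigr)$. The two terms of the bound correspond to different phenomena: the $\log n$ term arises already from the randomness of the perturbation, while the $\sqrt{n/d}$ term requires an adversarially chosen structure.

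For the $\Omega(\log n)$ contribution, I would note that on the constant-zero sequence $\seq = (0,\ldots,0)$ the perturbed vector $\pseq$ is uniformly distributed in $[0,d]^n$ and its rank order is a uniformly random permutation of $[n]$. For such a random input, the max-of-two pivot is the maximum of two uniformly chosen elements of the current subsequence, so its expected rank is roughly $2/3$ of the subsequence length; hence the surviving subsequence shrinks by a constant factor per iteration and the process makes $\Theta(\log n)$ iterations in expectation.

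For the $\Omega(\sqrt{n/d})$ contribution, I would use a symmetric sequence whose minima lie at both endpoints so that the pivot selected by the max-of-two rule is forced to be small. The natural candidate is the ``mountain'' sequence $\seq_i = 2\min(i, n+1-i)/n$: here $\seq_1 = \seq_n = 2/n$ and the maximum value $1$ is attained at the middle. By symmetry of the mountain the first and the last positions of every subsequence generated by the recursion have the same adversarial value, so the pivot at iteration $k$ is the maximum of two iid perturbations of that common value. Mirroring Manthey and Tantau's lower-bound argument for the classical rule, I would then identify an initial noise-dominated phase of $O(\log(nd))$ iterations in which the pivot climbs geometrically toward its noise ceiling, followed by an adversarial-dominated phase driven by the mountain's slope of $2/n$. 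In this second phase each iteration strips only the two current extreme positions from the subsequence, yielding $\Omega(\sqrt{n/d})$ additional scan maxima before the subsequence is exhausted.

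The main obstacle is the quantitative analysis of the transition between the two phases: after the noise phase one needs to ensure that enough of the mountain is still intact to drive the adversarial-dominated phase. I plan to handle this by jointly tracking the pivot $p_k$ and the extreme positions $a_k, b_k$ of the current subsequence, and by applying Chernoff-style concentration to the per-iteration survival events, so that a constant fraction of the ``edge'' positions survive each noise-phase iteration. This establishes the required lower bound on the second-phase iteration count and, combined with the $\Omega(\log n)$ from the randomness, completes the proof.
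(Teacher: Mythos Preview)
Your high-level plan (a mountain sequence for the $\sqrt{n/d}$ term, the all-zero sequence for the $\log n$ term) matches the paper, but the execution of the $\sqrt{n/d}$ part has a real gap.

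The claim that ``the first and the last positions of every subsequence generated by the recursion have the same adversarial value'' is false. It holds only at the very first step. Under the max-of-two rule, both current endpoints are indeed removed in every iteration, but so is \emph{every} other element whose perturbed value is at most the pivot. Hence the new leftmost and rightmost survivors are the smallest and largest indices whose perturbed value exceeds the pivot, and there is no reason these should be mirror images of one another: a large noise term at position $2$ and a small one at position $n-1$ already breaks the symmetry after a single step. Consequently the pivot in later rounds is \emph{not} the maximum of two iid perturbations of a common adversarial value, and the statement that ``each iteration strips only the two current extreme positions'' is also wrong. Your proposed two-phase analysis with Chernoff-type tracking of $(a_k,b_k,p_k)$ would have to control a process whose endpoints jump inward by random, correlated amounts; this is precisely the difficulty, and the proposal gives no mechanism for it.

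The paper sidesteps the iteration-by-iteration tracking entirely. On the same symmetric mountain sequence it partitions the two slopes into $\Theta(\sqrt{n/d})$ disjoint blocks $\Gamma_i$, each consisting of $2\sqrt{nd}$ consecutive positions on the left slope together with the mirror block on the right slope. For each block it shows, with constant probability, that there exist positions $j$ in the left half and $j'$ in the right half whose perturbed values dominate every $\pseq_i$ with $i<j$ or $i>j'$; a short structural argument (using only that both endpoints are removed in every round) then forces one of $j,j'$ to eventually become the first or last surviving element and hence a scan maximum. Linearity of expectation over the blocks gives $\Omega(\sqrt{n/d})$. This avoids any need to follow the evolution of the endpoints, which is where your argument breaks down.
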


\begin{proof}
For simplicity, we assume that $n$ is even. Let
$\seq = (\frac 1n, \frac 2n, \ldots, \frac{n/2-1}n, \frac 12, \frac 12, \frac{n/2-1}n, \ldots, \frac 1n)$.
Let
\[
  \Gamma_i  = \{i+1, i+2, \ldots, i+2\sqrt{nd}\} \cup
              \{n-i, n-i-1, \ldots, n-i-2\sqrt{nd}+1\}
\]
be the set of the $2\sqrt{nd}$ indices following $i$ plus the
$2\sqrt{nd}$ indices preceding $n-i$.
Note that $\seq_{\Gamma_i}$ for $i \leq n/2 - 2\sqrt{nd}$ contains the
corresponding values of the first and second half of $\seq$.

Let us estimate the probability that at least one element of $\Gamma_i$ becomes
a left-to-right maximum. If this probability is constant, then we immediately obtain
a lower bound of $\Omega(\sqrt{n/d})$ by linearity of expectation. (It then still remains to
prove the $\Omega(\log n)$ lower bound.)

Assume that there exist indices $j < j'$ such that $\seq_i < \min(\seq_j, \seq_{j'})$
for all $i < j$ and $i > j'$. Then at least one of them becomes a left-to-right
maximum.

\begin{figure}[t]
\centering
\includegraphics{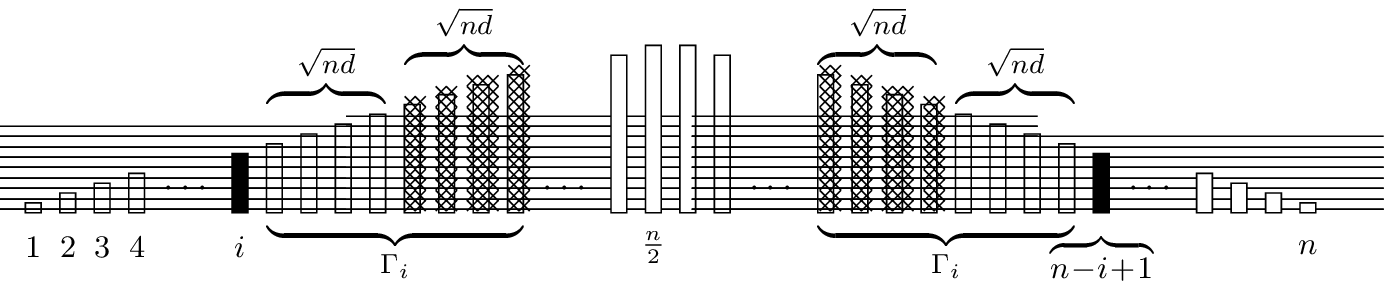}
\caption{$\Gamma_i$ consists of the $2\sqrt{nd}$ positions following position $i$
and preceding the $i$-th last position, which is $n-i+1$. We estimate the probability
that (\ref{smallelem1}\&\ref{smallelem2}) none of the elements drawn with horizontal lines gets a huge noise added to it and
(\ref{largeelem}) at least one of the elements drawn in crosshatch gets a huge noise and becomes
a scan maximum.}
\label{fig:scanlower}
\end{figure}

Fix any $i \leq \frac n2 - 2\sqrt{nd}$. Figure~\ref{fig:scanlower}
shows $\Gamma_i$ and illustrates the event whose probability we want
to estimate now.  Remember that $\noise_i$ denotes the additive noise
at position $i$. Assume the following holds:
\begin{enumerate}
\item $\noise_{i+1}, \ldots, \noise_{i+\sqrt{nd}} \leq d-\sqrt{\frac dn}$.
      \label{smallelem1}
\item $\noise_{n-i}, \noise_{n-i-1}, \ldots, \noise_{n-i-\sqrt{nd}+1}
      \leq d - \sqrt{\frac dn}$.
      \label{smallelem2}
\item There exist $j, j' \in \Gamma_i$ such that
     $\noise_{j}, \noise_{j'} > d-\sqrt{\frac dn}$.
      \label{largeelem}
\end{enumerate}
Choose $j$ to be minimal and $j'$ to be maximal.
Then $j > i+\sqrt{nd}$ and $j' \leq n-i-\sqrt{nd}$.
If the three properties above are fulfilled, then, by the choice of $j$ and $j'$,
$\pseq_j > \pseq_i$ for all $i < j$ and $i > j'$:
For $i \in \Gamma_i$, this follows from the minimality of $j$,
the maximality of $j'$.
For $i \notin \Gamma$, $i \leq n/2$, we have
$\pseq_i = \frac in + \noise_i \leq \frac in + d = \frac{i+\sqrt{nd}}n
+ d - \sqrt{\frac dn} \leq \pseq_j$
by the fact that $\noise_j > d-\sqrt{\frac dn}$.

Furthermore, $j$ or $j'$ is a left-to-right maximum:
Suppose not, then there must exist an $i < j$ or an $i > j'$
that becomes a pivot which causes positions $j$ and/or $j'$
to vanish. This, however, contradicts the property as shown above.
Thus, if the three properties are fulfilled, we have a
left-to-right maximum in $\Gamma_i$.

Let us estimate the probability that this happens.
We have
\[
  \probab\left(\noise_{i+1}, \ldots, \noise_{i+\sqrt{nd}} \leq d-\sqrt{\frac dn}\right)
  = \left(\frac{d-\sqrt{d/n}}{d}\right)^{\sqrt{nd}}
  = \left(1-\frac{1}{\sqrt{nd}}\right)^{\sqrt{nd}} \geq \frac 14
\]
if $\sqrt{nd} \geq 2$. The latter is fulfilled if $d \geq 4/n$. If $d= c/n$ is smaller,
we easily get a lower bound of $\Omega(n)$ by restricting the adversary to
the interval $[0,c/4]$: We can apply the bound for $d=4/n$ by scaling.

By symmetry, also
\[
  \probab\left(\noise_{n-i}, \ldots, \noise_{n-i-\sqrt{nd}+1} \leq d-\sqrt{\frac dn}\right)
    \geq \frac 14
\]
Furthermore,
\[
  \probab\left(\exists j \in \{i+\sqrt{nd} + 1, \ldots, i+2 \sqrt{nd}\}: 
    \noise_j > d-\frac dn\right)
  =
    1 - \left(\frac{d-\sqrt{d/n}}{d}\right)^{\sqrt{nd}}
  \geq 1-\frac 1e,
\]
and the same lower bound holds for the probability that
there exists a $j' \in \Gamma_i$ as described above.
Overall, the probability that $j$ and $j'$ exist is constant, which
proves the lower bound of $\Omega\bigl(\sqrt{n/d}\bigr)$.

To finish the proof, let us prove that, on average, we expect to
see $\Omega(\log n)$ scan maxima. To do this, let us consider
the sequence $\seq = (0,0,\ldots, 0)$. We obtain $\pseq$ by adding
noise from $[0,d]$. The ordering of the elements
in $\pseq$ is now a uniformly distributed random permutation.
We take a different view on the maximum-of-two pivot rule:
We take $\seq_1$, get a half point for it and eliminate
all elements smaller than $\seq_1$. If $\seq_n$ has also been eliminated,
then we have completed this iteration. Otherwise, we take $\seq_n$, get another
half point and again eliminate all smaller elements.

The number of scan maxima of $\pseq$ is at least the number of points we get.
Since the elements of $\pseq$ appear in random order, the expected number of
points is $\frac 12 \cdot H_n$, where $H_n$ is the average-case number of
left-to-right maxima.
\end{proof}

Now we turn to the upper bound for scan maxima.

\begin{lemma}
\label{lem:ltrm_minupper}
For all sequences $\seq$ and $d \geq \frac 1n$, we have
\[
  \expectedv{\qltr d \seq} \in O\left(\sqrt{\frac nd} + \log n\right).
\]
\end{lemma}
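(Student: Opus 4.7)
The plan is to couple the number of pivots produced by the minimum-of-two rule with the classical notions of left-to-right and right-to-left maxima of $\pseq$, and then invoke the smoothed bound of Manthey and Tantau~\cite{MantheyTantau:SmoothedTrees:2008} on each side.

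Two structural observations will drive the argument. First, the sequence of pivots $p_1, p_2, \ldots$ chosen by the minimum-of-two rule is strictly increasing: once $p_t$ has been selected, every remaining element is strictly greater than $p_t$, so $p_{t+1} > p_t$. Second, every pivot $p_t$ is either a left-to-right maximum or a right-to-left maximum of $\pseq$. Indeed, at the moment $p_t$ is chosen, it sits at the left or right end of the currently surviving subsequence; say the left end (the other case is symmetric). Every element originally to its left in $\pseq$ has already been eliminated in some earlier round, which means it is $\leq p_s$ for some $s < t$, and hence, by the monotonicity of the pivots, strictly less than $p_t$. Thus $p_t$ is larger than every element to its left in $\pseq$, i.e., a left-to-right maximum of $\pseq$.

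Combining these two observations,
\[
  \qltr d \seq \leq (\text{left-to-right maxima of } \pseq) + (\text{right-to-left maxima of } \pseq).
\]
Taking expectations, the Manthey-Tantau bound of $O\bigl(\sqrt{n/d} + \log n\bigr)$ applies directly to the expected number of left-to-right maxima of $\pseq$. Since reversing $\pseq$ is the same as adding uniform noise on $[0,d]$ to the reverse of $\seq$, which still lies in $[0,1]^n$, the same bound applies to right-to-left maxima. Summing yields the claimed $O\bigl(\sqrt{n/d} + \log n\bigr)$ bound. The only non-routine ingredient is the structural coupling in the second observation; once it is in place, the lemma follows immediately from existing bounds for the classic pivot rule.
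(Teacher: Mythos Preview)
Your proposal is correct and follows essentially the same approach as the paper: show that every pivot chosen by the minimum-of-two rule must be either a left-to-right or a right-to-left maximum of $\pseq$, and then bound each of these two counts by the Manthey--Tantau $O\bigl(\sqrt{n/d}+\log n\bigr)$ result. The only difference is cosmetic---you spell out the justification for the structural observation (via monotonicity of the pivot sequence and the position of each pivot at an end of the surviving subsequence), whereas the paper simply asserts it in one sentence.
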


\begin{proof}
First, we observe that a necessary condition for an element $\pseq_i$
to become a pivot element is that it is either
a left-to-right maximum (according to the usual rule), i.e.,
no element $\pseq_j$ for $j < i$ is greater than $\pseq_i$, or
that it is a right-to-left maximum, i.e., no element $\pseq_j$ for
$j > i$ is greater than $\pseq_i$.

Hence, an upper bound for $\qrltr \pseq$ is $\crltr \pseq$ plus
the number of right-to-left maxima. The former is at most
$O\bigl(\sqrt{n/d} + \log n\bigr)$, the latter can be
analyzed in exactly the same way. Thus, the lemma follows.
\end{proof}

From Lemmas~\ref{lem:ltrm_maxmed}, \ref{lem:ltrm_maxlower},
and~\ref{lem:ltrm_minupper} we immediately get tight bounds for the
number of scan maxima with median-of-three rule.

\begin{theorem}
For every $d \geq 1/n$, we have
\[
      \max_{\seq \in [0,1]^n} \expectedv{\tltr d\seq}
  \in \Theta\left(\sqrt{\frac nd} + \log n\right).
\]
\end{theorem}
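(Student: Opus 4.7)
The theorem is an immediate corollary of the three preceding lemmas on scan maxima, and my plan is simply to chain them together, being careful about which bound controls which side.

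First, I would invoke Lemma~\ref{lem:ltrm_maxmed}, which sandwiches the median-of-three count between the maximum-of-two and minimum-of-two counts: for every perturbed sequence $\pseq$, we have $\prltr\pseq \leq \trltr\pseq \leq \qrltr\pseq$. Taking expectations over the noise preserves the inequalities by monotonicity of $\expected$, so
\[
  \expectedv{\pltr d\seq} \;\leq\; \expectedv{\tltr d\seq} \;\leq\; \expectedv{\qltr d\seq}.
\]

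For the upper bound, I would apply Lemma~\ref{lem:ltrm_minupper} to the right-hand side: for every $\seq$ and every $d \geq 1/n$, $\expectedv{\qltr d\seq} \in O(\sqrt{n/d} + \log n)$. Hence $\max_{\seq \in [0,1]^n} \expectedv{\tltr d\seq} \in O(\sqrt{n/d} + \log n)$.

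For the matching lower bound, I would use Lemma~\ref{lem:ltrm_maxlower}, which exhibits a concrete sequence $\seq^\star$ with $\expectedv{\pltr d{\seq^\star}} \in \Omega(\sqrt{n/d} + \log n)$. The sandwich from Lemma~\ref{lem:ltrm_maxmed} then gives $\expectedv{\tltr d{\seq^\star}} \geq \expectedv{\pltr d{\seq^\star}} \in \Omega(\sqrt{n/d} + \log n)$, so the same lower bound transfers to the maximum over all adversarial sequences. Combining the matching upper and lower bounds yields the claimed $\Theta$. There is essentially no obstacle here; all the real work was done in Lemmas~\ref{lem:ltrm_maxmed}, \ref{lem:ltrm_maxlower}, and~\ref{lem:ltrm_minupper}, and the only thing to check is that Lemma~\ref{lem:ltrm_maxmed} is a pointwise inequality (valid for every realization of the noise), which is exactly how it is stated.
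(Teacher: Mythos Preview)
Your proposal is correct and matches the paper's approach exactly: the paper simply states that the theorem follows immediately from Lemmas~\ref{lem:ltrm_maxmed}, \ref{lem:ltrm_maxlower}, and~\ref{lem:ltrm_minupper}, which is precisely the sandwich argument you spell out.
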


\section{Quicksort and Hoare's Find with Median-of-three Rule}
\label{sec:mot3}

Now we use our results about scan maxima from the previous section to prove
lower bounds for the number of comparisons that quicksort and Hoare's find
need using the median-of-three pivot rule. We only prove lower bounds here
since they match already the upper bounds for the classic pivot rule. We strongly
believe that the median-of-three rule does not yield worse bounds
than the classic rule and, hence, that our bounds are tight.
Our main goal of this section is to prove the following result for Hoare's find.
This bound carries then over to quicksort.

\begin{theorem}
\label{thm:tsearch}
  For $d \ge 1/n$, we have
  \[
        \max_{\seq \in [0,1]^n} \expectedv{\tsearch{d}{\seq}}
    \in \Omega\bigl(\textstyle\frac{n}{d+1} \sqrt{n/d} + n \bigr).
  \]
\end{theorem}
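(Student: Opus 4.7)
Since $\tquick d\seq \geq \tsearch d\seq$ for every $\seq$, it suffices to prove the bound for Hoare's find. The classic-rule witness of Lemma~\ref{lem:search_lower}, namely $\seq = (1/n, 2/n, \ldots, (n/2)/n, 1, \ldots, 1)$, fails for the median-of-three rule because the middle and last positions both hold the value $1$, so the very first m3 pivot is already the maximum and the algorithm terminates at once. I therefore propose the symmetric tent
\[
  \seq = \bigl(\tfrac 1n, \tfrac 2n, \ldots, \tfrac an, \underbrace{1, 1, \ldots, 1}_{n - 2a \text{ times}}, \tfrac an, \ldots, \tfrac 2n, \tfrac 1n\bigr)
\]
with $a = \lfloor n/4 \rfloor$, and let the adversary request the maximum ($k = n$).

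When searching for the maximum, Hoare's find picks its pivots in increasing order of value and every pivot is an m3 scan maximum of $\pseq$. The key structural observation is that, in every sub-instance encountered during the recursion, the first and last elements lie in the small side blocks while the middle element lies in the $1$-block: the $1$-block contributes at least $n/2$ elements, and at most $2a \leq n/2$ small elements can ever be removed, so the $1$-block always makes up more than half of any sub-instance. Therefore the m3 rule collapses to ``$\max$ of first and last element'' applied to the side-block subsequence $(1/n, \ldots, a/n, a/n, \ldots, 1/n)$ of length $2a$. Adapting the argument of Lemma~\ref{lem:ltrm_maxlower} to this subsequence (or, more loosely, invoking Lemma~\ref{lem:ltrm_maxmed}, which gives $\trltr\pseq \geq \prltr\pseq$) produces an expected $\Omega(\sqrt{n/d} + \log n)$ scan maxima, each of which is a small element by construction.

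Each such small pivot $\pseq_i$ is compared against every remaining element larger than it, in particular against every $1$-block element that exceeds it. Call a position $j$ in the $1$-block \emph{safe} if $\noise_j > d - \tfrac 12$, so that $\pseq_j > \tfrac 12 + d \geq \pseq_i$ for every small index $i$. For $d \leq \tfrac 12$ every $1$-block element is safe, yielding $\Theta(n)$ comparisons per small pivot. For $d > \tfrac 12$ each $1$-block element is safe with probability $\tfrac{1}{2d}$, giving an expected $\Theta(n/d)$ safe elements. Crucially, the safe count depends only on the noise in the middle $1$-block while the small-pivot count depends only on the noise in the two side blocks, so the two random variables are independent and their expectations may be multiplied. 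The product is $\Omega\bigl(\tfrac{n}{d+1}\sqrt{n/d}\bigr)$ expected comparisons, and the additive $+n$ term is the trivial best-case cost of reading the array once.

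The main obstacle is to justify the structural invariant that, throughout the small-pivot phase, the first and last elements of every sub-instance are indeed small and the middle is indeed a $1$-block element; once this invariant is in place, the analysis reduces cleanly to the scan-maxima bounds of Section~\ref{sec:ltrm}, and the remaining work is routine Chernoff bookkeeping to rule out the unlikely event that too many $1$-block elements are removed or that the middle position drifts out of the $1$-block. The resulting bound carries over to quicksort with the median-of-three rule because $\tquick d\seq \geq \tsearch d\seq$ for every $\seq$.
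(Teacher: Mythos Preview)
Your construction and high-level strategy coincide with the paper's: a symmetric tent with small side blocks and a central block of value~$1$, the adversary asks for the maximum, and one multiplies the number of ``small'' pivots by the number of central elements guaranteed to exceed them. The paper uses side blocks of size $n/3$ rather than $n/4$, but this is immaterial.

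The genuine gap is your structural invariant. You need that in every sub-instance the m3 pivot equals $\max(\text{first},\text{last})$, which requires the middle element of the sub-instance to dominate both endpoints in \emph{value}. For $d$ bounded away from zero (already for $d>3/4$, and certainly throughout the regime $d\gg 1$ where the bound is $\Theta((n/d)^{3/2})$), a side element can take any value in $[0,\tfrac14+d]$ while a $1$-block element takes values in $[1,1+d]$; these ranges overlap, so the middle position may well be the median of the three, or even the smallest. In that case the pivot is a $1$-block element, many elements (including $1$-block elements) are removed at once, and your invariant collapses. This is a structural failure, not a large-deviation event, so no amount of Chernoff bookkeeping repairs it. Even for small $d$, once one side block is exhausted the new endpoint is a $1$-block element and the invariant breaks; showing that $\Omega(\sqrt{n/d})$ pivots occur \emph{before} that point needs a separate argument, not just the count from Lemma~\ref{lem:ltrm_maxlower}.

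The paper sidesteps all of this by never claiming that the m3 pivots sit in the side blocks. Instead it uses the pointwise inequality hidden in the \emph{proof} of Lemma~\ref{lem:ltrm_maxmed}: the $i$-th m3 scan maximum $m_i$ of $\pseq$ satisfies $m_i\le m'_i$, where $m'_i$ is the $i$-th max2 scan maximum of $\pseq$. Since the first $\Omega(\sqrt{n/d})$ max2 scan maxima of $\pseq$ coincide with those of the side subsequence (both endpoints of $\pseq$ lie in the side blocks as long as each side still contains an element above the current pivot), each of these $m'_i$ has value at most $\tfrac13+d$. Hence the first $\Omega(\sqrt{n/d})$ m3 pivots have \emph{value} at most $\tfrac13+d$, irrespective of their \emph{position}. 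Any central element with $\noise_j>d-\tfrac23$ then exceeds all of them, which happens with probability $\tfrac{2}{3d}$ independently of the side-block noise. This value-based argument is what lets the proof go through uniformly in $d$; your position-based invariant does not.
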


\begin{proof}
We use the \emph{maximum-of-two} rule to prove this lower bound. To this end,
consider the following sequence: Let
$\Delta = \{1,\ldots,\frac n3\} \cup \{\frac{2n}{3}+1,\ldots,n\} $ and let
$\seq$ be defined by
\[
  \seq_i = \begin{cases}
           \min(\frac{i}{n}, \frac{n-1-i}{n}) & \text{if $i \in \Delta$ and} \\
           1                                  & \text{otherwise.}
           \end{cases}
\]
Figure~\ref{fig:findlower} gives an intuition how $\seq$ looks like. We observe
that $\seq_\Delta$ is, up to scaling, identical to the sequence used in
Lemma~\ref{lem:ltrm_maxlower} (up to scaling). To analyze the number of
comparisons, we distinguish between small and large values of $d$.

\begin{figure}[t]
\centering
\includegraphics{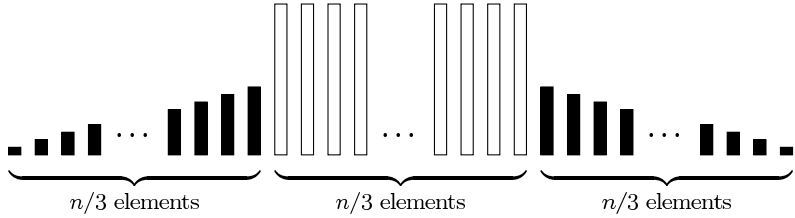}
\caption{How the sequence of Lemma~\ref{thm:tsearch} looks like. The black elements
contribute scan maxima, the white elements are large elements. All black scan
maxima have to be compared to all or at least $\Omega(n/d)$ white elements.}
\label{fig:findlower}
\end{figure}

First, assume that $d \leq \frac 23$. Then all elements of $\pseq_{[n]-\Delta}$
are greater than all elements of $\pseq_\Delta$, including the scan maxima
of $\pseq_\Delta$. From Lemma~\ref{lem:ltrm_maxmed} and the proof of
Lemma~\ref{lem:ltrm_maxlower}, we know that $\pseq_\Delta$ contains
$\Omega\bigl(\sqrt{n/d} + \log n\bigr)$ scan maxima. Each of
these maxima has to be compared to all of the $n/3$ elements of
$\pseq_{[n]-\Delta}$, resulting in
$\Omega\bigl(n \cdot \bigl(\sqrt{n/d} + \log n\bigr)\bigr)$
comparisons.

The second case is $d \geq \frac 23$. Again, there are
$\Omega\bigl(\sqrt{n/d} + \log n\bigr)$ scan maxima under the
maximum-of-two rule in $\pseq_\Delta$, which carry over
to $\pseq$. According to Lemma \ref{lem:ltrm_maxmed}, there are at least that many
median-of-three scan maxima (m3 maxima) in $\pseq$, but since $d$ may be greater than $\frac{2}{3}$,
some of the m3 maxima may be from $\pseq_{[n] \setminus \Delta}$. This poses no
harm because the position of the pivots is of no relevance to the sorting
process, but only their magnitude. In turn, the magnitude of an m3
maximum is at most the magnitude of the corresponding maximum-of-two scan maximum
(max2 maximum).

We can now bound the number of comparisons appropriately. The probability that
an element $\pseq_i$ ($i \in [n] \setminus \Delta$) is greater than the first
$\Omega\left(\sqrt{n/d} + \log n\right)$ m3 maxima is at
least the probability that it is greater than all elements of $\pseq_\Delta$
maxima, which are located in $\pseq_\Delta$, i.e. 
\[ 
 \probab \left(\overline\tau_i > \text{ first } 
  \Omega\left(\sqrt{n/d} + \log n\right) \text{ m3-LTRMs } \right) \ge
  \probab \left( 1 + \nu_i > \frac{1}{3} + d \right) = \frac{2}{3d}.
\]
Thus, by linearity of expectation, an expected number of $\Omega(n/d)$
elements of $\pseq_{[n] \setminus \Delta}$ are greater than the first
$\Omega\bigl(\sqrt{n/d} + \log n\bigr)$ m3 maxima and have to be
compared to all of them. This requires
$\Omega\bigl(\frac nd \cdot \sqrt{\frac nd}\bigr)$ comparisons.
Since we always need at least $\Omega(n)$ comparisons, the theorem follows.
\end{proof}

Since the number of comparisons that Hoare's find needs is a lower
bound for the number of quicksort comparisons, we immediately get
the following result for quicksort.

\begin{corollary}
\label{cor:tsort}
  For $d \ge 1/n$, we have
  \[
        \max_{\seq \in [0,1]^n} \expectedv{\tquick{d}{\seq}}
    \in \Theta\bigl(\textstyle\frac{n}{d+1} \sqrt{n/d} + n \log n\bigr).
  \]
\end{corollary}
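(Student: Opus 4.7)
The plan is to derive both directions of the $\Theta$ from results already established in the paper. For the lower bound, I would first observe that on any input sequence $\seq$ and for any target rank $k$, Hoare's find performs a subset of the comparisons that quicksort performs on $\seq$: in each recursive step both algorithms compare the same pivot to all remaining elements, but quicksort then recurses on both subproblems while Hoare's find only descends into the side containing the sought element. Consequently $\tsearch{d}{\seq} \leq \tquick{d}{\seq}$ holds pointwise (for the appropriate $k$), and therefore $\expectedv{\tquick{d}{\seq}} \geq \expectedv{\tsearch{d}{\seq}}$. Applied to the hard sequence constructed in the proof of Theorem~\ref{thm:tsearch}, this immediately yields the lower bound $\Omega\bigl(\frac{n}{d+1}\sqrt{n/d} + n\bigr)$.

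To strengthen the additive $n$ to $n\log n$, I would invoke the standard information-theoretic sorting lower bound: producing a sorted permutation of $n$ distinct elements requires at least $\lceil\log_2(n!)\rceil = \Omega(n\log n)$ comparisons. Since quicksort fully sorts its input regardless of the pivot rule and regardless of the noise realization, this bound applies deterministically on every outcome, hence in expectation. Combining this with the previous bound gives the claimed lower bound of $\Omega\bigl(\frac{n}{d+1}\sqrt{n/d} + n\log n\bigr)$.

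For the matching upper bound my plan is to appeal to the Manthey--Tantau bound of $O\bigl(\frac{n}{d+1}\sqrt{n/d} + n\log n\bigr)$ for the classic rule and transfer it to m3. Concretely, one re-runs their comparison-counting argument after substituting the $O(\sqrt{n/d} + \log n)$ m3-scan-maxima bound from Section~\ref{sec:ltrm} in place of their classic-scan-maxima bound. The hard part is that Manthey and Tantau's analysis exploits the specific structure of the classic pivot---namely, that it is the leftmost remaining element---whereas the m3 pivot is selected among three positions, which complicates the conditioning on ``an element survives a given pivot.'' However, Lemma~\ref{lem:ltrm_maxmed} sandwiches m3 between the maximum-of-two and minimum-of-two rules, so the relevant survival probabilities lose only constant factors in the transfer, which are absorbed into the $O(\cdot)$. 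Once both directions are in hand, the $\Theta$ claim follows.
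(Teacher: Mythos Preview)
Your lower-bound argument---deducing $\trsearch{\seq} \leq \trquick{\seq}$ pointwise from the fact that Hoare's find performs a subset of quicksort's comparisons, then invoking Theorem~\ref{thm:tsearch} together with the $\Omega(n\log n)$ sorting lower bound---is exactly what the paper does. Its entire proof consists of those two observations.

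For the upper bound, note that the paper's proof does \emph{not} establish it: the proof cites only Theorem~\ref{thm:tsearch} (an $\Omega$ result) and the $\Omega(n\log n)$ observation, and the introduction to Section~\ref{sec:mot3} says explicitly ``We only prove lower bounds here \ldots\ We strongly believe that the median-of-three rule does not yield worse bounds than the classic rule.'' Table~\ref{tab:additive} likewise records only $\Omega$ for quicksort~(m3). So the $\Theta$ in the corollary statement is effectively a typo for $\Omega$; the paper does not supply an upper-bound argument.

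Your own upper-bound sketch has a genuine gap. Lemma~\ref{lem:ltrm_maxmed} compares only the \emph{counts} of scan maxima under the three rules; it says nothing about ``survival probabilities'' in the full two-sided quicksort recursion. The Manthey--Tantau analysis for the classic rule exploits that the pivot is always the leftmost remaining element, which gives a clean structure for bounding comparisons on \emph{both} sides of each split. Under m3 the pivot position depends on three indices that shift unpredictably after every split, and the sandwich of Lemma~\ref{lem:ltrm_maxmed} controls only the one-sided chain of scan maxima, not both recursive subproblems. Hence ``re-running their argument with the m3 scan-maxima bound substituted in'' is not a proof as stated---and indeed the paper does not claim to have one.
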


\begin{proof}
The result follows from Theorem~\ref{thm:tsearch} and the observation
that quicksort always requires at least $\Omega(n \log n)$ comparisons.
\end{proof}

\section{Hoare's Find Under Partial Permutations}
\label{sec:hoarepp}

To complement our findings about Hoare's find, we analyze the number of
comparisons subject to partial permutations. For this model, we already have an
upper bound of $O(\frac np \log n)$, since that bound has been proved for
quicksort by Banderier et al.~\cite{BanderierEA:Smoothed:2003}.

We show that this is asymptotically tight (up to factors depending only on $p$)
by proving that Hoare's find needs a smoothed number of
$\Omega\bigl((1-p) \frac np \cdot \log n\bigr)$ comparisons.

The main idea behind the proof of the following theorem is as follows:
We aim at finding the median. The first few elements
are close to and smaller than the median (few means roughly $\Theta((m/p)^{1/4})$).
Thus, it is unlikely that one of them is permuted further to the left.
This implies that all unmarked of the first few elements become pivot
elements. Then we observe that they have to be compared to
many of the $\Omega(n)$ elements larger than the median, which
yields our lower bound.

\begin{theorem}
\label{thm:partial}
Let $p \in (0,1)$ be a constant. There exist sequences $\seq$ of
length $n$ such that under partial permutations we have
\[
      \expectedv{\csearch p\seq}
  \in \Omega\left((1-p) \cdot \frac np \cdot \log n\right).
\]
\end{theorem}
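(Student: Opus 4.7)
Following the sketch given just before the theorem, my plan is to take $\seq$ to be a permutation of $\{1, \ldots, n\}$ whose first $m := \lceil (n/p)^{1/4} \rceil$ positions carry the values $n/2 - m, n/2 - m + 1, \ldots, n/2 - 1$ in increasing order, so that every entry in the prefix is close to, but strictly below, the median $n/2$. The median itself is placed somewhere after the prefix, and the remaining values fill $\seq$ in an arbitrary way.

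The key probabilistic step will be to show that, with probability bounded below by a positive constant, no value larger than the median is permuted into any of the first $m$ positions of $\pseq$. A given above-median value is moved into the prefix only if both its original position and a prefix position are marked and the uniform random bijection among the marked positions assigns that value to the prefix position, which has probability of order $p \cdot m/(pn) = m/n$ for one specific value. Summing over the $\Theta(n)$ above-median values, the expected number of intrusions is only of constant order, so a naive union bound is too weak; a direct analysis of the hypergeometric-type distribution of intrusions, conditional on the set of marked positions, will be needed to conclude that with constant probability no intrusion occurs at all, and the choice $m = \Theta((n/p)^{1/4})$ is precisely the largest value at which this argument still works.

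Conditional on this good event, every value in the prefix lies below the median, and the unmarked positions in $\{1, \ldots, m\}$ still carry their original, strictly increasing values. Hence each unmarked position is a left-to-right maximum of $\pseq_1, \ldots, \pseq_i$, and is therefore selected as a pivot by Hoare's find when searching for the median. A Chernoff bound yields $\Omega((1-p) m)$ such unmarked pivots with constant probability, and because each of them is strictly below the median, Hoare's find recurses on the set of elements larger than that pivot; this set still contains the $\Omega(n)$ values above the median, so each such pivot is compared with $\Omega(n)$ other elements.

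Multiplying the two contributions gives the desired bound $\expectedv{\csearch p \seq} \in \Omega\bigl((1-p) \cdot \frac{n}{p} \cdot \log n\bigr)$; the logarithmic factor enters at the final accounting stage through the substitution $m = \lceil (n/p)^{1/4} \rceil$ together with a sharper count of how many of the $\Omega((1-p) m)$ unmarked positions genuinely survive as left-to-right maxima after the reshuffling of the marked intruders inside the prefix. The main obstacle is precisely the first probabilistic step: turning the constant-order expected number of intrusions into a constant-order lower bound on the probability of \emph{zero} intrusions. This requires a careful hypergeometric analysis rather than the naive union bound, and it is what forces the delicate choice of $m$ to be polynomially large in $n/p$ but strictly smaller than $\sqrt{n}$.
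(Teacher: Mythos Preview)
Your central probabilistic claim is false. You assert that the expected number of above-median intrusions into the prefix is ``of constant order'', but your own per-value estimate is $\Theta(m/n)$, and summing over the $\Theta(n)$ above-median values gives $\Theta(m)$ expected intrusions (a cleaner count gives $\Theta(pm)$, but either way the quantity diverges for $m=(n/p)^{1/4}$). Consequently the probability of \emph{zero} intrusions is not bounded below by a constant; it decays like $\exp(-\Omega(pm))$. No hypergeometric refinement can rescue this: once the mean is large, the point mass at zero is exponentially small. There is also an external sanity check you missed: your product $\Omega\bigl((1-p)\,m\,n\bigr) = \Omega\bigl((1-p)\,n^{5/4}p^{-1/4}\bigr)$ would, for constant $p$, exceed the known $O\bigl((n/p)\log n\bigr)$ upper bound for quicksort, so the argument must be overshooting somewhere.

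The paper's proof does not condition on a global ``no intrusion'' event. Instead it \emph{accepts} that positive values land among the first positions and tracks, for each prefix position $i$ separately, the random cutoff $M_i = \min\{\pseq_j : j<i,\ \pseq_j \ge 0\}$ (with the median normalised to $0$). If position $i$ is unmarked and $\seq_i$ becomes a pivot, then that pivot is compared to at least $M_i$ elements --- namely all values in $[0,M_i)$ --- because every earlier positive pivot has value at least $M_i$. A direct estimate on the random permutation yields
\[
\probab\bigl(i\text{ unmarked},\ \seq_i\text{ is a pivot},\ M_i\ge k\bigr) \;\ge\; c\,(1-p)\,\exp\!\bigl(-C\,pki/n\bigr)
\]
for suitable constants $c,C>0$. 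Summing over $k$ gives an expected contribution of order $(1-p)\,n/(pi)$ from position $i$, and then summing over $i\in[m]$ produces the harmonic factor $\sum_i 1/i = \Theta(\log m) = \Theta(\log n)$. So the logarithm comes from the \emph{per-position decay} of the cutoff $M_i$, not from any count of surviving left-to-right maxima; and the exponent $1/4$ in $m=(n/p)^{1/4}$ is chosen only so that auxiliary Chernoff estimates hold with constant probability, not to control an intrusion count.
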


\begin{proof}
For simplicity, we restrict ourselves to odd $n$ and permutations of
$-m, -m+1, \ldots, m$ for $2m+1=n$. This means that $0$ is the median of the
sequence. Let $Q = (m/p)^{1/4}$. We consider the sequence
\[
  \seq = (-Q, -Q+1, \ldots, -1, -m, \ldots, -Q-1, 1, \ldots, m, 0).
\]
The  important part of $\seq$ are the first $Q$ elements. All other elements can
as well be in any other order.

Assume that the unperturbed element $\pseq_i = -Q+i-1$ ($i \leq Q$) becomes a
pivot and is unmarked. The latter happens with a probability of $1-p$. The
former means that all marked elements among $-Q+i, \ldots, -1$ are permuted further
to the right (more precisely: not to the left of position $i$). Let
\[
  M_i = \min\bigl(\{\pseq_j \mid \pseq_j \geq 0, j < i\} \cup \{m+1\}\bigr).
\]
Then $\pseq_i$ contributes $M_i$ comparisons. (Actually, at least $M_i + Q - i$
comparisons, but we ignore the $Q-i$ since it does not contribute to the
asymptotics.) Let $E_i^k$ be the event that the $i$-th position is unmarked,
$\pseq_i = \seq_i$ becomes a pivot, and $M_i \geq k$. Using lower bounds for
$\probabv{E_i^k}$, we get a lower bound for the expected number of comparisons.

Let $A$ be the number of marked positions prior to $i$,
let $B$ be the number of marked elements among $-Q+i, \ldots, -1$ and
among $0, \ldots, k$, and let $N$
be the total number of marked elements.

Given this and $A \leq B$, the probability of $E_i^k$ is
\begin{align*}
 W_k & =  (1-p) \cdot \frac{N-A}N \cdot \frac{N-A-1}{N-1} \cdot \ldots \cdot
  \frac{N-A-B+1}{N-B-1}  \\
  & \geq (1-p) \cdot \left(\frac{N-A-B}{N}\right)^{A} 
 =  (1-p) \cdot \exp\left(A \cdot \ln \left(1- \frac{A+B}N \right) \right)\\
& \geq (1-p) \cdot \exp\left(-\frac{2A(A+ B)}{N}\right) \geq (1-p) \cdot \exp\left(-\frac{4AB}N\right).
\end{align*}
The first inequality holds since $A \leq B$ and therefore most factors cancel each other out.
The second inequality holds since $\ln(1-x) \geq -2x$ for $x \in [0, \frac 34]$.
The third inequality holds again since $A \leq B$.

This bound is monotonically decreasing in $A$ and $B$, and monotonically
increasing in $N$. Thus, we need upper bounds for $A$ and $B$ and a lower
bound for $N$.
Now let $1/p \leq i \leq Q-1/p$, and let $k \geq \sqrt{m/p}$.
At most $2pi$ positions prior to $i$, at most
$2p(Q-i)$ positions after $i$ and before $Q$ are marked with a probability
of $\Omega(1)$. Furthermore, at least $\frac p2 n$ positions overall are marked,
and at most $2pk$ elements among $0, \ldots, k$ are marked. The last
two requirements happen with a probability close to $1$.
This yields $A \leq 2pi$, $B \leq 2pk + 2p(Q-i) \leq 3pk$ as well
as $N \geq \frac p2 n$. Since $i \geq 1/p$ and $Q-i \geq 1/p$,
the probability that all these bounds are satisfied is at least a constant $c > 0$.
This allows us to bound $W$ as follows:
\[
  W_k  \geq c \cdot (1-p) \cdot \exp\left(-\frac{48pki}{n}\right).
\]
Let $K_i = \exp\bigl(-\frac{48pi}n\bigr)$. We observe that
$K_i^{\sqrt{m/p}} \geq c' \in \Omega(1)$. Using this to bound the expected
number of comparisons, we get that the expected number of comparisons with the
unmarked $\pseq_i$ as the pivot element is at least
\begin{align*}
  W_{\sqrt{m/p}} \cdot \sqrt{\frac mp} + \sum_{k > \sqrt{m/p}}^m W_k 
& \geq  cc' \cdot (1-p) \cdot \sum_{k = 1}^m K_i^k = cc' \cdot (1-p) \cdot K_i \cdot \frac{1-K_i^{m+1}}{1-K_i} \\
& \geq \frac{cc'}2 \cdot (1-p) \cdot \frac{1}{1-K_i} 
 \geq \frac{cc'}2 \cdot (1-p) \cdot \frac{n}{96pi}.
\end{align*}
We use the linearity of expectation, sum over all
$i \in \{1, \ldots, (m/p)^{1/4}\}$, and get the desired bound.
\end{proof}

For completeness, to conclude this section, and as a contrast to
Sections~\ref{sec:hoaremax} and~\ref{sec:hoaremedian}, let us remark that for
partial permutations, finding the maximum using Hoare's find seems actually to
be easier than finding the median: The lower bound constructed above for finding
the median needed that there are elements on either side of the element we aim
for. If we aim at finding the maximum, all elements are on the same side of the
target element. In fact, we believe that for finding the maximum, an expected
number of $O(f(p) \cdot n)$ for some function $f$ depending on $p$ suffices.

\section{Concluding Remarks}
\label{sec:concl}

We have shown tight bounds for the smoothed number of comparisons for
Hoare's find under additive noise and under partial permutations.
Somewhat surprisingly, it turned out that, under additive noise,
Hoare's find needs (asymptotically) more comparisons for finding the
maximum than for finding the median.  Furthermore, we analyzed
quicksort and Hoare's find with the median-of-three pivot rule, and we
proved that median-of-three does not yield an asymptotically better
bound. Let us remark that also the lower bounds for left-to-right
maxima as well as for the height of binary search
trees~\cite{MantheyReischuk:SmoothedBST:2007} can be transferred to
median-of-three. The bounds remain equal in terms of the number $n$ of
elements.

A natural question regarding additive noise is what happens when the
noise is drawn according to an arbitrary distribution rather than the
uniform distribution. Some first results on this for left-to-right
maxima were obtained by Damerow et al.~\cite{DamerowMRSS2003}.  We
conjecture the following: If the adversary is allowed to specify a
density function bounded by $\phi$, then all upper bounds still hold
with $d = 1/\phi$ (the maximum density of the uniform distribution on
$[0,d]$ is $1/d$). However, as Manthey and Tantau point
out~\cite{MantheyTantau:SmoothedTrees:2008}, a direct transfer of the
results for uniform noise to arbitrary noise might be difficult.


\end{document}